\newtheorem{definition}{Definition}
\newtheorem{theorem}{Theorem}
\newtheorem{example}{Example}
\newtheorem{property}{Property}
\newtheorem{corollary}{Corollary}
\newtheorem{lemma}{Lemma}
\newcommand{\rcp}{\longleftarrow}
\newenvironment{proof}{{\bf Proof }}{\begin{flushright}$\square$\end{flushright}}
\title{A Distributed Clustering Algorithm for Dynamic Networks}
\author{Thibault Bernard${}^{(1)}$, Alain Bui${}^{(2)}$, Laurence Pilard${}^{(2)}$ and Devan Sohier${}^{(2)}$}
\date{${}^{(1)}$ CReSTIC, Université de Reims CA, France, \\
{\small thibault.bernard@univ-reims.fr } \\
${}^{(2)}$
PR\emph{i}SM UMR CNRS, Université de Versailles Saint-Quentin, France,\\
{\small \{alain.bui, laurence.pilard, devan.sohier\}@prism.uvsq.fr}
}
\begin{document}

\maketitle

\begin{abstract}
We propose an algorithm that builds and maintains clusters over a network subject to mobility. This algorithm is fully decentralized and makes all the different clusters grow concurrently. The algorithm uses circulating tokens that collect data  and move according to a random walk traversal scheme. Their task consists in (i) creating a cluster with the nodes it discovers and (ii) managing the cluster expansion; all decisions affecting the cluster are taken only by a node that owns the token. The size of each cluster is maintained higher than $m$ nodes ($m$ is a parameter of the algorithm). The obtained clustering is locally optimal in the sense that, with only a local view of each clusters, it computes the largest possible number of clusters (\emph{ie} the sizes of the clusters are as close to $m$ as possible). This algorithm is designed as a decentralized control algorithm for large scale networks and is mobility-adaptive: after a series of topological changes, the algorithm converges to a clustering. This recomputation only affects nodes in clusters in which topological changes happened, and in adjacent clusters.
\end{abstract}

\section{Introduction}

Scalability in distributed system  has become a major challenge nowadays, in structuring and managing communications. We propose a solution to manage large-scale networks based on the division of the system into subsystems, called clusters. We focus in this paper on algorithms that build clusters and maintain them after topological reconfiguration. The algorithms we propose are decentralized: all nodes execute the same code. This allows all clusters to be built concurrently, which is desirable for efficiency.

Large-scale networks are often subject to mobility: their components can connect or disconnect. This phenomenon has to be taken into account. The algorithm being decentralized also allows the algorithm to have no distinguished node, the failure of which would lead to a major re-clustering. The connection or disconnection of a node has only a limited impact (that we can state) on the algorithm.

Random walks are naturally adaptive to dynamic networks such as ad-hoc sensors network \cite{BBCD02,DoSW06} because they make use only of local up-to-date information. Moreover they can easily manage connections and disconnections occurring in the network.

Our solution takes these different constraints into account. It is based on the circulation of several tokens. Each token creates a cluster and coordinates its growing in a decentralized way.

A random walk based algorithm is a token circulation algorithm in which a token randomly moves among the nodes in the network. A random walk can be used as base of a distributed token circulation algorithm. This token collects and disseminates information in the network. At each step of the execution of the algorithm, the random walk (the token) is on a node $i$ of the network. The node that owns the token chooses one of its neigbour $j$ with a probability $1/degree(i)$. 
It is important to remark that this definition ensures that all nodes, with high probability, 
 eventually own the token, and that the token, with high probability, 
 eventually hits all nodes \cite{Lova93}.

In \cite{BeBF04b,BBFR06b}, we introduced and used the combination of a circulating word, \emph{i.e.} the token has a content to collect and broadcast data (this concept is formally defined  in Section \ref{sectionCirculatingWord}) and a  random walk as moving scheme of the token. Using this combination, we proposed solutions to build adaptive spanning trees for systems like ad-hoc sensor networks. These solutions are tolerant to transient failures in the network. 

In these works, we also proposed a fully decentralized solution to the communication deadlock problem, introducing a new control mechanism called \emph{reloading wave}. These works have been used to propose a solution to the resource allocation problem in ad-hoc networks \cite{BBFN10}. Such a combination has also been used in \cite{BBFR06b} to build and maintain spanning structures to solve on-the-fly resources research in peer-to-peer grids.

Although the token perpetually circulates in the network in order to update the underlying structure, we bound the size of the circulating word to $2n-1$ in the case of bidirectional communication links and to $n^2/4$ in the case of unidirectional communication links, by retaining only the most recent data necessary to build the tree (with $n$ the size of the network, \cite{Bern06}). 

We use the content of the circulating word to structure the network into different clusters. Their construction and their maintenance are achieved in a decentralized way. Using the properties of random walks and of circulating words, the clusters are able to adapt to topological reconfigurations. Thus, this solution can be used to design distributed control algorithm on large scale dynamic networks.

Unlike solutions described in \cite{Basa99,JoNg06}, our solution does not use any local leader on a cluster. The advantage of such solutions is that if a node ``moves'' in the network, this never entails a total reconstruction of the clusters. After a topological change, the system eventually converges to a correct global state without having to rebuild all clusters. This kind of approach on a 1-hop solution is described in \cite{FIMT05} in which re-clustering mechanism are used. Our solution is totally decentralized as opposed to \cite{BBCD02}, in which a spanning structure of the whole network is built in a first step, to be divided using a global mechanism. Our solution is realized in a fully concurrent way. As stated  in \cite{ABCP96}, it considerably accelerates the construction of the different clusters. Thus our solution satisfies the property highlighted in \cite{FaLa08}.

Moreover, we guarantee that after a topological change, only a bounded portion of the system is affected. Nodes that are in clusters that are not adjacent to the one in which it occurs have no extra work, and are not even aware of this event.

In the first section, we present some preliminary notions about random walk based distributed algorithms, and we present with more details the clustering problem we solve. The second section gives the fundamental distributed clustering algorithm we designed. The third section provides proofs about the correctness of the algorithm. The fourth section is about mobility: we present the slight adaptations needed to handle nodes and links mobility, as well as proofs of this algorithm; we also present a locality result: after topological modifications, in the worst case, the only clusters that are affected are the clusters in which the topological changes took place and clusters that are adjacent to them. Finally, we conclude this paper by presenting some future works.

\section{Preliminaries}

In this section, we define a distributed system, a random walk and a circulating word. We also introduce the notion of clustering.
 
\subsection{Distributed system} 

A distributed system is a connected graph $G =(V,E)$, where $V$ is a set of nodes with $|V| = n$ and $E$ is the set of bidirectional communication links. A node is constituted of a computing unit and a message queue. Two nodes are called neighbors if a communication link $(i,j)$ exists between them. Every node $i$ can distinguish between all its communication links and maintains a set of its neighbors (denoted $N_i$). The degree of $i$ is the number of its neighbors. We consider distributed systems in which all nodes have distinct identities. 

We consider asynchronous systems: we do not assume any bound on communication delays and on processing times.

\subsection{Random walk}

A random walk is a sequence of nodes visited by a token that starts at $i$ and visits other nodes according to the following transition rule: if the token is at $i$ at step $t$ then at step $t+1,$ it will be at one of the neighbors of $i$, this neighbor being chosen uniformly at random among all of them  \cite{Lova93,AKLL+79}. 

A random walk on a connected graph eventually visits any node with probability 1 (\emph{whp} in the following). It means that, for any finite time $t$, it is possible that the walk does not hit a given node for $t$ steps, but that the probability that it does so tends to 0 as $t$ grows (\cite{Lova93}).


\subsection{Circulating word} \label{sectionCirculatingWord}

A circulating word is a list of data gathered during the circulation of the token in the network. In this work, the gathered data is the identifiers of the visited nodes. The word is represented as follow: $w = {<}w[1],\ldots, w[k]{>}$ where $k$ is the size of the circulating word and $w[l]$ is the node identifier in position $l$ in the word. Each time the token visits a node, the node identifier is added at the beginning of the word. 

The token follows a random walk scheme, which allows an efficient management of network. The word gathers identifiers of the visited nodes in order to maintain an adaptive spanning tree of the network. Only the most recent data is used to build this spanning tree, and, thus, only $2n-1$ entries of the word are used. Older data can be removed, which bounds the word length to $2n-1$. The detailed procedure to reduce the size of the word is described in \cite{Bern06} (cf. algorithm \ref{algo-cleanWord}).

\noindent We use the following procedures:
\begin{itemize}
\setlength{\itemsep}{-0.05cm}
\item Procedure $Size(w: word): integer$ -- returns the size of the word $w$;
\item Procedure $Nb\_Identities(w: word): integer$ -- returns the number of distinct identities in the word $w$;
\item Procedure $Add\_Begin(j: identifier,w: word): word$ -- adds identifier $j$ at the beginning of $w$.
\end{itemize}

\noindent The procedure $Build\_Tree$ computes from a word $w$ a tree $\cal A$ rooted in $w[1]$.
\begin{algorithm}[H]
{\small
\begin{algorithmic}[1]
\STATE ${\cal A}\rcp \emptyset$
\STATE $Set\_Root({\cal A},w[1])$
\FOR{$k=2$ to $Size(w)$}
	\IF{$w[k]\not\in \cal A$}
		\STATE $add\_Tree({\cal A}, w[k], w[k-1])$ \hspace*{0.5cm} \texttt{// add $w[k]$ as the son of $w[k{-}1]$ in {\cal A}}
	\ENDIF
\ENDFOR
\STATE return $\cal A$
\end{algorithmic}
\caption{Procedure $Build\_Tree(w: word): tree$ \label{algo-buildTree}}
}
\end{algorithm}

\noindent The following procedures take a rooted tree as entry (node $i$ executes this procedure): 
\begin{itemize}
\setlength{\itemsep}{-0.05cm}
\item $Tree\_To\_Word({\cal A}: tree) : word$ -- computes a word $w$ such that the first identifier of $w$ is the root of $\cal A$;
\item $My\_Sons({\cal A}: tree): identifiers\_set$ -- returns the set of $i$'s sons in $\cal A$;
\item $My\_Father({\cal A}: tree): identifier$ -- returns $i$'s father in $\cal A$.
\end{itemize}

\begin{example}
Let $G=(V,E)$ with $V=\{1,\ldots,n\}$. Consider a random-walk based token with word $w'={<}1,5,3,2,3,6,3,2,4{>}$ corresponding to a token initially generated at node 4 and arriving at node 1 after 8 movements. By algorithm \ref{algo-buildTree} the tree described in   \ref{figure-tree} \emph{(a)} is obtained.

Now, assume that the link $3-2$ disappears. At this point, the tree is not consistent with the network topology. The traversal technique using random-walk based token is used to maintain an adaptive communication structure taking into account last collected elements: if after 4 movements the word becomes $w={<}1,6,2,4,1,5,3,2,3,6,3,2,4{>}$. The tree evolves over time, cf. figure \ref{figure-tree} \emph{(b)}, taking into account the most recent data and then the tree becomes consistent with the network topology. 

The idea of the reduction  technique is to delete all useless informations in the construction of the tree. Then the word reduction of $w'$ is the following reduced word: ${<}1, 6, 2, 4, 1, 5, 3{>}$ -- we obtain the same tree. This reduction is done after each movement of the token using the procedure $Clean\_Word$, cf algorithm \ref{algo-cleanWord}. We prove in \cite{Bern06} that the size of the circulating word is bounded by $2n-1$.
\end{example}

%
%

\begin{figure}[H]
\begin{center}
\scalebox{0.8}{\includegraphics{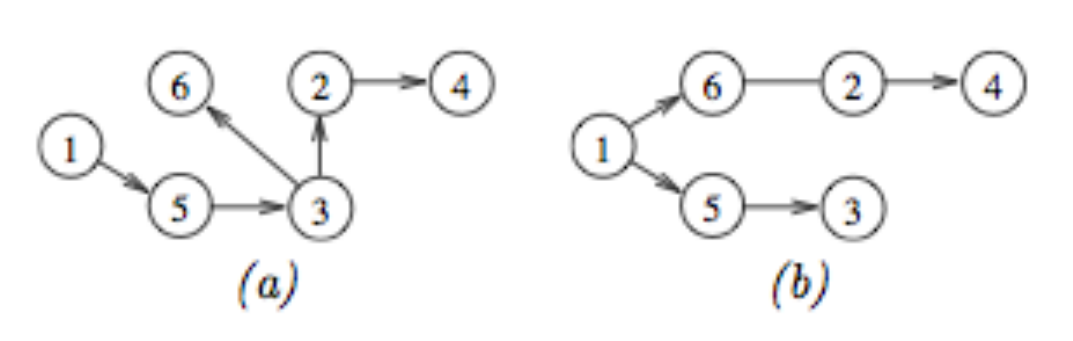}}
\caption{Construction of the tree rooted in node $1$.\label{figure-tree}}
\end{center}
\end{figure}

The procedure $Clean\_Word()$ removes all successive occurrences of $i$ in the word. It also keeps the smallest prefix of the word necessary to represent a sub-tree of the cluster, allowing to bound the size of the word. 

\begin{algorithm}[H]
{\small
\begin{algorithmic}[1]
\STATE $z \rcp 1$
\STATE $visited \rcp \{w[1]\}$
\WHILE {$z<Size(w)$}
	\IF {$w[z] = i$}
		\WHILE {$Size(w)>z+1 \wedge ((w[z+1]=w[z]) \lor (w[z+1]\notin visited \wedge w[z+1]\notin N_i))$}
		\STATE $Delete\_Element(w,z+1)$
		\ENDWHILE
	\ENDIF
	\STATE $visited \rcp visited \cup \{w[z+1]\}$
	\STATE $z \rcp z+1$
\ENDWHILE
\STATE return $w$
\end{algorithmic}
\caption{Procedure $Clean\_Word(w: word): word$ \label{algo-cleanWord}}
}
\end{algorithm}

\subsection{Clusters}

To allow scalability of random walk-based algorithms, we propose to divide the system into clusters. Each cluster is maintained by a token. The size of all clusters is greater than $m$, that is a parameter of the algorithm. Clusters are larger than $m$ nodes and as close to $m$ nodes as possible. In the following, we suppose that the system is connected and contains at least $m$ nodes.

A cluster is represented by a variable $col$ on each node. This variable is the \emph{color} (identifier) of the cluster to which the node belongs. Each cluster has to be connected.
\begin{definition}[Cluster]
The cluster of color $c$, noted $V_c$, is the set of all nodes having the color $c$ (if non-empty).
$$V_c = \{i \in V, col_i=c \}$$

A non-empty set of nodes $V'$ is a cluster if there exists a color $c$ such that $V'=V_c$.
\end{definition}

\begin{definition}
We call a cluster \emph{divisible} if it can be split into two connected subgraphs of sizes greater than $m$.
\end{definition}

\begin{definition}
A spanning tree {\cal A} of a cluster is called divisible if it contains an edge that, when removed, leads to two trees of size greater than $m$.
\end{definition}

\begin{property}
A divisible cluster admits a divisible spanning tree.
\end{property}
\begin{proof}
Let $V$ a divisible cluster. Then, it admits two connected subgraphs $V_1$ and $V_2$ with sizes greater than $m$. $V_1$ (resp. $V_2$) admits a spanning tree $\mathcal A_1$ (resp. $\mathcal A_2$). $V$ being connected, consider an edge $a$ between a node in $V_1$ and a node in $V_2$. Then, the tree $\mathcal A$ obtained from $\mathcal A_1$ and $\mathcal A_2$ by adding the edge $a$ is a spanning tree of $V$ that is divisible (when removing $a$, we obtain two trees of size greater than $m$).
\end{proof}

Inthe algorithm \emph{``growing''} phase, the clusters grow from an empty set by annexing nodes. Two situations must be managed: divisible clusters, and clusters with a size strictly lower than $m$.
\begin{definition}[Stable cluster]
A cluster $V_c$ is called \emph{stable} if it is large enough:
$$ stable(V_c) = (|V_c|\geq m)$$
\end{definition}

\begin{definition}[Free Node]
A node is called free if it does not belong to any cluster:
$$free(i)= (col_i= null)$$
\end{definition}

Thus, each node in the network either belongs to a cluster or is free.

\subsection{Problem specification}

The aim of this algorithm is to build a clustering. Thus, we want that:\begin{itemize}
\item all nodes belong to a cluster;
\item clusters are connected.
\end{itemize}

Additionally, we require that all clusters have a size greater than $m$ and are not divisible.

To forbid a trivial clustering consisting in all nodes setting their color to the same color, we add an extra constraint: we require that there is no divisible cluster. No clustering algorithm working on arbitrary topologies can set both a non-trivial lower and a non-trivial upper bound on the size of the clusters. Indeed, consider a lower bound $m>1$ and an upper bound $M<n$ on a star graph on $n$ nodes. Then, there is at least two clusters (since $M<n$), and the central node is in a cluster $V$. Then, any node $i$ that is not in $V$ is in a connected cluster, that can only be $\{i\}$, which contradicts the assumption $m>1$.

Thus, we try to obtain clusters as small as possible with a size greater than $m$. Due to the distributed nature of the problem we consider, this \emph{``as small as possible''} has to be detected on a basis that is local to the cluster: no global view can be used to compute the clusters. This is why we translate \emph{``as small as possible''} in \emph{``not divisible''}

Finally, we are willing to \textbf{compute clusters of size greater than $m$ that are not divisible}.

\section{Algorithm description}

One or several nodes start the algorithm by creating a token. Each token is associated to a cluster and circulates perpetually according to a random walk scheme, building and maintaining its own cluster.
The algorithm execution goes through different phases. At the beginning, a cluster (in fact its associated token) sequentially annexes nodes, \emph{ie} when a token meets a free node, the free node joins the cluster: it is the \emph{collect mechanism}. When a token meets another cluster (maintained by another token), two cases can occur: either the token is sent back to its own cluster, or it triggers a \emph{dissolution mechanism} on its own cluster if this cluster has a size below $m$ (\emph{i.e.} a non-stable cluster). The goal of this mechanism is to delete clusters that cannot reach a size of $m$ by making all their nodes free. The third mechanism is the \emph{division mechanism}: if a cluster grows and becomes divisible, then the cluster is divided into two smaller stable clusters.

The algorithm uses five types of messages: \emph{Token, Dissolution, FeedbackDiss, Division} and \emph{FeedbackDiv}. \emph{Token} messages are the main messages of the algorithm. They represent a cluster and collect node identifiers  during their circulation (algorithms \ref{algo-create} and \ref{algo-token}). The four other types of messages are control messages circulating in a cluster built by a token. The dissolution and division mechanisms are based on a classical propagation of information with feedback \cite{Sega83, Tel94, BDPV07} over the cluster that has to be dissolved or divided. The dissolution mechanism uses \emph{Dissolution} and \emph{FeedbackDiss} messages to make all nodes in a cluster free  (algorithms \ref{algo-diss} and \ref{algo-feedbackDiss}). The division mechanism uses \emph{Division} and \emph{FeedbackDiv} messages to divide one large cluster into two smaller clusters (of size still greater than $m$) (algorithms \ref{algo-div} and \ref{algo-feedbackDiv}). 

\subsection{Nodes and tokens variables}\label{variables}

A cluster is identified by a \emph{color}: each node having this color belongs to this cluster. The color of a cluster is the identifier of the node that has created this cluster. A cluster is created when a free node awakens and creates a token (cf. Algorithm \ref{algo-create}).

\noindent Each token $T$ contains:
\begin{itemize}
\setlength{\itemsep}{0cm}
\item $col_T$: the color (identifier) of the cluster it represents and
\item $w_T$: the set of nodes that belongs to this cluster (this is the circulating word).
\end{itemize}
The token gathers identities of the visited nodes as explained in the section above.

A node $i$ saves the identifier of its cluster, and the identifiers of the nodes belonging to this cluster. 
The local variables of a node having the identifier $i$ are:
\begin{itemize}
\setlength{\itemsep}{0cm}
\item $col_i$: color of the cluster $i$ belongs to;
\item $w_i$: circulating word of the last token with the same color as $i$ that has visited $i$;
\item $nbFeedbackDiv_i$: number of $FeedbackDiv$ messages received during the division phase;
\item $nbFeedbackDiss_i$: number of $FeedbackDiss$ messages received during the dissolution phase.
\end{itemize}
The last two variables are necessary to the {Propagation of Information with Feedback} algorithms used in the dissolution and division mechanisms.

The definition domain of $col_i$ is: $\{$node identifiers in the network$\}\cup \{null, -1\}$. We assume there is no node in the network having the identifier -1.
A free node is such that $col_i=null$. A \emph{locked} node is such that $col_i=-1$. 
The difference between a free node and a locked node is that a free node does not belong to any cluster and can join one, while a locked node belongs to a ``false'' cluster, which forbids the node to join a real one. The locked state is used in the dissolution mechanism and it is such that no node will remain locked forever.

The variable $w_i$ is used to break the symmetry in the dissolution mechanism (cf. section \ref{section-collect}) and avoid reciprocal destructions. It also allows node $i$ to know all identifiers of its cluster.

\subsection{Algorithm}

In the following, all procedures are executed by node $i$.

Initially, all nodes are free.

\begin{algorithm}[H]
{\small
\begin{algorithmic}[1]
	\STATE $col_i \rcp null $
	\STATE $w_i \rcp \varepsilon$
	\STATE $nbFeedbackDiv_i \rcp 0$
	\STATE $nbFeedbackDss_i \rcp 0$
\end{algorithmic}
\caption{On initialization
\label{algo-init}}
}
\end{algorithm}

\subsubsection{Collect Mechanism} \label{section-collect}

When awakening, a free node creates a token with probability $1/2$; otherwise it goes back to sleep for a random time. The color of the created token is the identifier of the node that creates the token.

\begin{algorithm}[H]
{\small
\begin{algorithmic}[1]
	\IF{$col_i = null$} 
		\STATE Toss a coin
		\IF{tail}
			\STATE $col_i \rcp i$ \label{create-deb}
			\STATE $w_i \rcp {<}i{>}$
			\STATE Random choice of $j \in N_i$, Send $Token(col_i, w_i)$ to $j$ \label{create-fin}
		\ENDIF
	\ENDIF
\end{algorithmic}
\caption{On node $i$ awakening
\label{algo-create}}
}
\end{algorithm}

To describe the various cases of the algorithm \ref{algo-token}, we assume that node $i$ receives a $red$ token: 

\paragraph{If $i$ is $red$ or free (Algorithm \ref{algo-token} lines \ref{token-case1} to \ref{token-joinFin}):}
First $i$ adds its own identifier at the beginning of the circulating word and then cleans this word (Algorithm \ref{algo-token} lines \ref{token-add} and \ref{token-check}). This ensures that the circulating word always represents a spanning tree of the whole cluster.

Second, if the tree represented by the word in the token is divisible, then $i$ launches a division: a spanning tree is built from the circulating word and the division is done over this spanning tree (algorithm \ref{algo-token} lines \ref{token-divDeb} to \ref{token-divFin}). Otherwise $i$ joins the cluster and sends the token to one of its neighbor chosen  uniformly at random (algorithm \ref{algo-token} lines \ref{token-joinDeb} to \ref{token-joinFin}). 

In the following, more details about the division are given (lines \ref{token-divDeb} to \ref{token-divFin}). The spanning tree is partitioned into two sub-trees using the following procedure $Divide({\cal A}~tree): (w_1: word, w_2: word)$:
\begin{enumerate}
\item ${\cal A}_1$ and ${\cal A}_2$ are two subtrees of $\cal A$;
\item the union of ${\cal A}_1$ and ${\cal A}_2$ is $\cal A$ and the intersection of ${\cal A}_1$ and ${\cal A}_2$ is empty;
\item the root of ${\cal A}_1$ is $i$;
\item $w_1$ is a word representing ${\cal A}_1$;
\item $w_2$ is a word representing ${\cal A}_2$;
\item $Nb\_Identities(w_1) \geq m$ and $Nb\_Identities(w_2) \geq m$.
\end{enumerate}

Such an algorithm can, for example, associate to each node in $\mathcal A$ the size of the subtree rooted in this node. If for a node $i$, this size $k$ is such that $k\geq m$ and $Nb\_identities(\mathcal A)-k\geq m$, then it returns the word computed from the subtree rooted in $i$ and the word computed from the subtree obtained by removing $i$ and its descendants from $\mathcal A$. If no node verifies this, then the tree is not divisible.

\centerline{\scalebox{.5}{\includegraphics{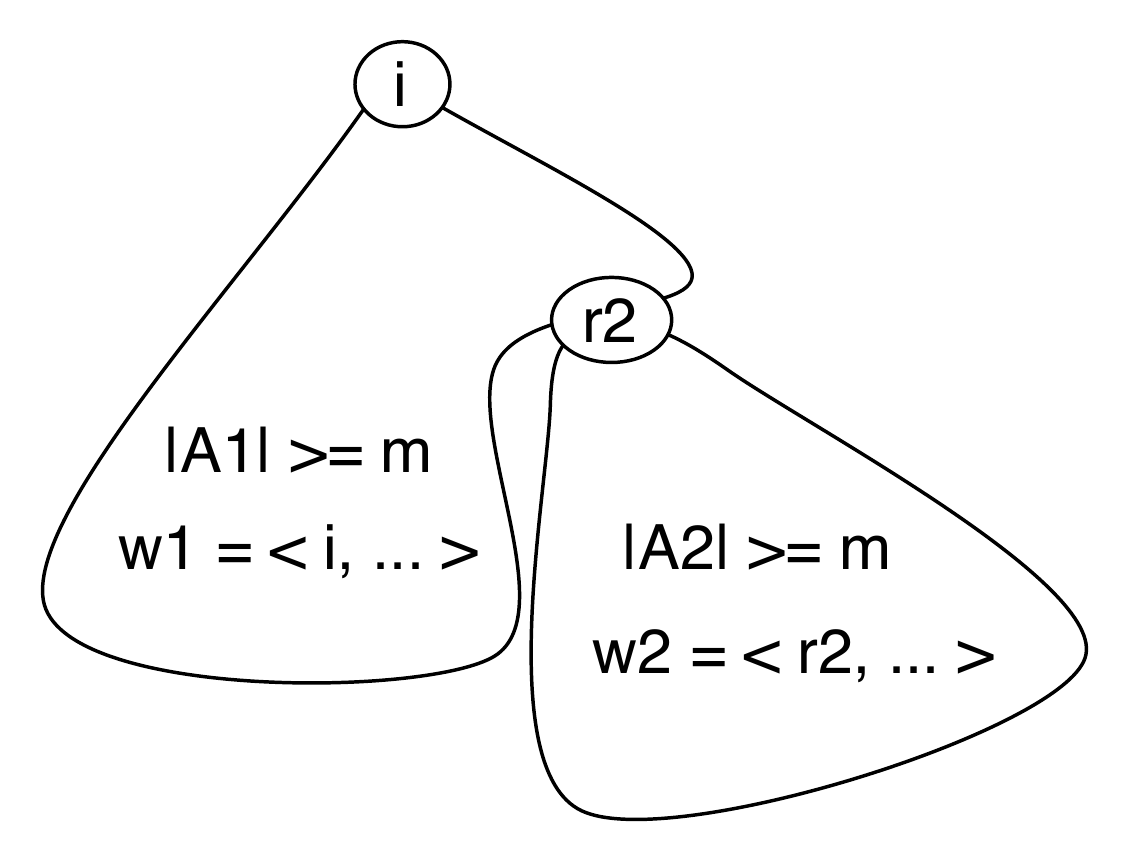}}}
The node $i$ uses the procedure $Is\_Divisible$ in order to know if the spanning tree is divisible:\\
\centerline{Procedure $Is\_Divisible({\cal A}~tree) : boolean$ -- returns $true$ iff $Divide({\cal A})$ is possible}

If the node $i$ launches a division, then $i$ joins the cluster represented by the word $w_1$. The color of this new cluster is $i$. Then $i$ initiates a propagation of $Division$ messages in the spanning tree of the whole cluster (\emph{i.e.} the tree {\cal A}).

\paragraph{If $i$ is locked (Algorithm \ref{algo-token} lines \ref{token-falseDeb} and \ref{token-falseFin}):}
$i$ sends the token back to its sender.

\paragraph{If $i$ is $blue$ (Algorithm \ref{algo-token} lines \ref{token-case2Deb} to \ref{token-case2Fin}):}
If the size of the cluster represented by $T$, noted $\phi_T$, is too small ($|w_T|<m$), then the cluster is dissolved in order for the cluster to which $i$ belongs to grow. This dissolution is achieved by $i$ launching a dissolution mechanism. However, if the size of $i$'s cluster, noted $\phi_i$, is also too small, then we have to avoid the case when $\phi_i$ is destroyed by a node in $\phi_T$ and $\phi_T$ is destroyed by a node in $\phi_i$. Thus $i$ can launch a dissolution mechanism over $\phi_T$ only if the size of $w_i$ is large enough ($w_i\geq m$) or if $i$'s color is greater than the one of $T$ (lines \ref{token-dissDeb} to \ref{token-dissFin}).

Note that $w_i$ does not contain all node identifiers in $\phi_i$, but a subset of it. Indeed when the token representing $\phi_i$, noted $t$, arrived in $i$ for the last time, $i$ saved $w_t$ in $w_i$. Then all identifiers in $w_i$ are identifiers of $\phi_i$. However, $t$ kept circulating after arriving in $i$ and then some free nodes kept joining $\phi_i$. Thus, some identifiers in $\phi_i$ may not belong to $w_i$.

If $i$ does not launch a dissolution mechanism, then it sends the token back to its sender (line \ref{token-sendBack}).

\begin{algorithm}[H]
{\small
\begin{algorithmic}[1]
	\IF{$(col_i = null \lor col_i = col_T)$ \label{token-case1}}
		\vspace*{5pt}
	\STATE \texttt{//------ Case 1: (i is free) or (i is red and receives a red token)}\\[5pt]
		\STATE $w_T \rcp Add\_Begin(i,w_T)$ \label{token-add}
		\STATE $w_T \rcp Clean\_Word(w_T)$ \label{token-check}  \\[5pt]
		\IF{$Is\_Divisible(w_T)$  \label{token-divDeb}} 
			\vspace*{5pt}
			\STATE \texttt{// If the cluster is large enough, we launch a division}\\
			\STATE ${\cal A} \rcp Build\_Tree(w_T)$
			\STATE $(w_1, w_2) \rcp Divide({\cal A})$ 
				\texttt{~~~~~~~ // $w_1={<}i, \ldots{>}$ and $w_2={<}r2, \ldots{>}$}
			\STATE $col_i \rcp i$
			\STATE $w_i \rcp w_1$
			\STATE $\forall j \in My\_Sons({\cal A}):$ Send 
					$Division({\cal A}, w_1, w_2)$ to $j$ \label{token-divFin} \\[5pt]
		\ELSE \label{token-joinDeb}
		\vspace*{5pt}
		\STATE \texttt{// Otherwise i joins the cluster and forwards the token}\\
			\STATE $col_i \rcp col_T$
			\STATE $w_i \rcp w_T$
			\STATE Random choice of $j \in N_i$, Send $Token (col_T, w_T)$ to $j$ \\[5pt]
		\ENDIF \label{token-joinFin}
		\vspace*{5pt}
	\ELSIF{$col_i=-1$ \label{token-falseDeb}}
			\STATE  Send $Token (col_T, w_T)$ to $w_T[1]$ \label{token-falseFin}\\[5pt]
		\STATE \texttt{//------ Case 2: i is blue and receives a red token} \label{token-case2Deb}\\[5pt]
	\ELSE
		\vspace*{5pt}
		\STATE \texttt{// If the red cluster is too small and under some asymmetric assumptions, 
			$i$ can dissolve it\\
			// Otherwise $i$ sends back the token to its sender}\\
		\IF{$Nb\_Identities(w_T)<m \land (Nb\_Identities(w_i)\geq m \lor col_i > col_T)$ \label{token-dissDeb}}
			\STATE ${\cal A} \rcp Build\_Tree(w_T)$
			\STATE  Send $Dissolution({\cal A})$ to $w_T[1]$ \label{token-dissFin}
		\ELSE 
			\STATE  Send $Token (col_T, w_T)$ to $w_T[1]$ \label{token-sendBack}
		\ENDIF 		
	\ENDIF \label{token-case2Fin}
\end{algorithmic}
\caption{On the reception of $Token(col_T, w_T)$
\label{algo-token}}
}
\end{algorithm}

\subsubsection{Dissolution mechanism}

The dissolution mechanism is used to totally delete a cluster with a size smaller than $m$. This mechanism is a classical propagation of information with feedback. A $Dissolution$ message is propagated through a spanning tree of the cluster to dissolve, and then during the feedback phase, a $FeedbackDiss$ message is propagated in the tree. Each of these two kinds of messages has one variable, $\cal A$ that is a rooted spanning tree of the cluster that has to be dissolved. Both propagation and feedback waves are propagated on the tree $\cal A$.

During the propagation phase, a node receiving a $Dissolution$ message leaves its cluster and becomes locked. The propagation phase is used to completely delete the cluster. Then during the feedback phase, a node receiving a $FeedbackDiss$ message becomes a free node.  At the end of a dissolution, all nodes that used to be in the cluster are free, and the cluster does not exist anymore.

\begin{algorithm}[H]
{\small
\begin{algorithmic}[1]
	\STATE $col_i \rcp -1$
	\STATE $w_i \rcp \varepsilon $
	\vspace*{5pt}
	\IF{$|My\_Sons({\cal A})| > 0$}
		\STATE $\forall j \in My\_Sons({\cal A}) :$ Send $Dissolution({\cal A})$ to $j$
		\STATE $nbFeedbackDiss_i \rcp 0$
	\ELSE
		\STATE Send $FeedbackDiss({\cal A})$ to $My\_Father({\cal A})$
		\STATE $col_i \rcp null$
	\ENDIF
\end{algorithmic}
\caption{On the reception of $Dissolution({\cal A})$
\label{algo-diss}}
}
\end{algorithm}

\begin{algorithm}[H]
{\small
\begin{algorithmic}[1]
	\STATE $nbFeedbackDiss_i ++$\\
	\texttt{// if $i$ receives the last feedback it was waiting for}
	\IF{$nbFeedbackDiss_i = |My\_Sons({\cal A})|$}
		\STATE send $FeedbackDiv({\cal A})$ to $My\_Father({\cal A})$
		\STATE $col_i \rcp null$
	\ENDIF
\end{algorithmic}
\caption{On the reception of $FeedbackDiss({\cal A})$
\label{algo-feedbackDiss}}
}
\end{algorithm}

\subsubsection{Division mechanism}

The division mechanism is used to divide a cluster $\phi$ into two smaller clusters $\phi_1$ and $\phi_2$ with a size greater than $m$. This mechanism is a Propagation of Information with Feedback (PIF). A $Division$ message is propagated in a spanning tree of the cluster to divide, and then during the feedback phase, a $FeedbackDiss$ message is propagated through the tree. Each of these two kind of messages has three variable:
\begin{itemize}
\item $\cal A$: a rooted spanning tree of the cluster that has to be dissolved;
\item $w_1$: a rooted spanning tree containing all node identifiers of the first sub-cluster $\phi_1$;
\item $w_2$: a rooted spanning tree containing all node identifiers of the first sub-cluster $\phi_2$.
\end{itemize}

Subtrees $w_1$ and $w_2$ are a partition of the tree $\cal A$. Moreover, the first identifier of $w_1$ (resp. $w_2$) is the root of the sub-tree built by $w_1$ (resp $w_2$). The color of the cluster $\phi_1$ (resp $\phi_2$) is the identifier of the root of the trees built by $w_1$ (resp. $w_2$). The PIF is propagated among the tree $\cal A$.

During the propagation phase, a node receiving a $Division$ message checks in which subtree, $w_1$ or $w_2$, it belongs to, and joins this cluster accordingly (algorithm \ref{algo-div}, lines \ref{div-joinDeb} to \ref{div-joinFin}). Then the node executes the PIF algorithm (algorithm \ref{algo-div}, lines \ref{div-pifDeb} to \ref{div-pifFin}). When the root of $w_1$ (resp. $w_2$) receives the last $FeedbackDiv$ message it is waiting for, the node creates a new token initialized with the word $w_1$ (resp. $w_2$) (algorithm \ref{algo-feedbackDiv}).

The division mechanism is based on a PIF rather than a diffusion, because when a division is launched, the token of the cluster disappears. The new tokens are only created during the feedback, so that when they are created, they correspond to the existing clusters.

\begin{algorithm}[H]
{\small
\begin{algorithmic}[1]
	\IF{$i \in w_1$ \label{div-joinDeb}}
		\STATE $w_i\rcp w_1$
		\STATE $col_i\rcp w_1[1]$
	\ELSE
		\STATE $w_i\rcp w_2$
		\STATE $col_i\rcp w_2[1]$
	\ENDIF \label{div-joinFin}
	\vspace*{5pt}
	\IF{$|My\_Sons({\cal A})| > 0$ \label{div-pifDeb}}
		\STATE $\forall j\in My\_Sons({\cal A})$ : Send $Division({\cal A},w_1,w_2)$ to $j$
		\STATE $nbFeedbackDiv_i \rcp 0$
	\ELSE
		\STATE Send $FeedbackDiv({\cal A},w_1,w_2)$ to $My\_Father({\cal A})$
	\ENDIF \label{div-pifFin}
\end{algorithmic}
\caption{On the reception of $Division({\cal A},w_1,w_2)$
\label{algo-div}}
}
\end{algorithm}

\begin{algorithm}[H]
{\small
\begin{algorithmic}[1]
	\STATE $nbFeedbackDiv_i ++$\\
	\texttt{// if $i$ receives the last feedback it was waiting for}
	\IF{$(i\in w_1 \land nbFeedbackDiv_i = |My\_Sons(w_1)|) \lor (i\in w_2 \land nbFeedbackDiv_i = |My\_Sons(w_2)|)$}
		\IF{$i=w_1[1]$}
			\STATE \texttt{// if $i$ is the root of the first tree}
			\STATE Random choice of $j \in N_i$, send $Token(i, w_1)$ to $j$
		\ELSIF{$i = w_2[1]$}
			\STATE \texttt{// if $i$ is the root of the second tree}
			\STATE Random choice of $j \in N_i$, send $Token(i, w_2)$ to $j$
		\ELSE
			\STATE send $FeedbackDiv({\cal A}, w_1, w_2)$ to $My\_Father({\cal A})$
		\ENDIF
	\ENDIF
\end{algorithmic}
\caption{On the reception of $FeedbackDiv({\cal A}, w_1, w_2)$
\label{algo-feedbackDiv}}
}
\end{algorithm}

\section{Proof of correctness}

In this section, we prove that starting from an initial configuration, the clustering eventually meets the problem specification (see subsection 2.5).

\subsection{Preliminaries}

Assuming $G=(V,E)$ is the communication graph and $c$ a color, we have the following definitions.

\begin{definition}[Graph induced by a cluster]
Let $V_c$ be a cluster. We note $G_c$ the graph induced by all the nodes in $V_c$.
$$G_c = (V_c, E_c) \mbox{ with: } E_c = E \cap (V_c\times V_c)$$
\end{definition}

\begin{definition}[Cluster neighbors]
Let $V_c$ be a cluster. The neighborhood of $V_c$, noted $N(V_c)$, is:
$$N(V_c) = \{i\in V, \exists j\in V_c, j\in N_i\}\cup V_c$$
\end{definition}

Note that $V_c\subset N(V_c)$.

\begin{definition} A configuration $\gamma$ is called legitimate iff:
\begin{itemize}
\item each node belongs to exactly one cluster;
\item each cluster is connected;
\item each cluster has a size greater than $m$;
\item no cluster is divisible;
\item there is exactly one token of color $c$ in $N(V_{c})$;
\item there are only $Token$ messages circulating in the network.
\end{itemize}
\end{definition}

\begin{property}
A legitimate configuration respects the specification of the problem:
\begin{enumerate}
\item all nodes belong to a cluster;
\item all clusters are connected;
\item each cluster has a size greater than $m$;
\item no cluster is divisible.
\end{enumerate}
\end{property}

\begin{definition} A configuration is called initial iff:
\begin{itemize}
\item each node is free;
\item there is no message circulating in the network.
\end{itemize}
\end{definition}

\subsection{Correctness proofs}

We show in this section that from the initial configuration the system reaches a legitimate configuration.

At the initialization, there is no message in the network. Then the only rule a node can execute is the algorithm \ref{algo-create}. Only free nodes can execute this rule, and its execution makes the node create a token with probability $\frac12$. Thus a node $i$ will eventually execute this rule and will then create a token $T = Token(i,<i>)$. 

At this point a cluster $V_i$ is created, and node $i$ takes the color $i$. $T$ begins to circulate in the network. We say that $V_i$ is in its \emph{growing} phase. Each time $T$ is received by a node, this node (i) enters in the cluster writing $i$ in its $col$ variable, adds its id at the beginning of the word $w_T$ and makes $T$ circulate, or (ii) sends back $T$ to the sender, or (iii) makes the cluster change its phase, going from phase \emph{growing} to the phase \emph{division} ($T$ disappears and the node sends back a division message) or \emph{dissolution} ($T$ disappears and the node sends back a division message). Thus, during the \emph{growing} phase, the cluster $V_i$ is connected, there is only one token of color $i$ in the network and $w_T$ is a spanning tree of $V_i$.

\begin{lemma}
Except when being dissolved or divided, clusters are connected.
\end{lemma}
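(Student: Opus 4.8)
The plan is to prove that a cluster remains connected throughout its life by induction on the sequence of operations that can modify either the $col$ variables or the circulating word. The statement must be read carefully: we claim connectivity of $V_c$ at every configuration in which the cluster is neither in its dissolution phase nor in its division phase. The key invariant to maintain is stronger than mere connectivity: I would show that at every such configuration, the word $w_T$ carried by the (unique) token of color $c$ represents a spanning tree of $G_c$, so that in particular $V_c$ is connected. Connectivity of $V_c$ then follows immediately, since a spanning tree of $G_c$ exists only if $G_c$ is connected.

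First I would set up the base case: when a token is created by Algorithm \ref{algo-create}, we have $V_c=\{i\}$ and $w_T={<}i{>}$, which is trivially a connected spanning tree. Next I would enumerate the operations that alter the cluster during its growing phase, which is exactly Case 1 of Algorithm \ref{algo-token} (lines \ref{token-case1} to \ref{token-joinFin}) when the join branch is taken. Here a free neighbor $j$ of the node currently holding the token joins $V_c$, and $j$'s identifier is prepended to $w_T$ via $Add\_Begin$ followed by $Clean\_Word$. I would argue that prepending the identifier of a node adjacent to the current word-head, and then cleaning, preserves the property that $Build\_Tree(w_T)$ yields a spanning tree of the updated $G_c$: the new node is attached as a child of the node from which the token arrived, which is a genuine edge of $E_c$, and $Clean\_Word$ (Algorithm \ref{algo-cleanWord}) only removes redundant entries that do not affect the induced tree, as established in the cited word-reduction result of \cite{Bern06}.

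I would then treat the remaining moves that leave the cluster in a non-transitional phase. When the token is sent back to its sender (lines \ref{token-falseFin}, \ref{token-sendBack}) no $col$ variable and no word entry changes, so connectivity is trivially preserved. The only other events are the triggering of a division or a dissolution; but these precisely move the cluster into the excepted phases, so they fall outside the scope of the statement. Thus the inductive step is closed: starting from the connected singleton, every operation that keeps the cluster outside the dissolution and division phases preserves the spanning-tree invariant, hence connectivity.

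I expect the main obstacle to be the justification that $Clean\_Word$ preserves the induced spanning tree rather than merely shrinking the word; one must verify that removing a successive duplicate or an entry that is neither already visited nor a current neighbor of $i$ never disconnects the tree produced by $Build\_Tree$. This is where I would lean most heavily on the invariant from \cite{Bern06} that the reduced word represents the same tree as the unreduced one. A secondary subtlety is asynchrony: between the moment the token leaves $i$ and the moment the \emph{next} node appends itself, free nodes elsewhere cannot join the cluster (joining only happens on token reception), so no ``silent'' growth can violate connectivity; I would note this explicitly to rule out concurrent modifications of $V_c$ that are not mediated by the single circulating token.
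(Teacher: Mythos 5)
There is a genuine gap: your induction never establishes connectivity of clusters that are \emph{born from a division}. The exception clause in the statement excludes only the transitional phase while a cluster is being divided; once the division PIF completes, the two offspring $\phi_1$ and $\phi_2$ (with words $w_1$ and $w_2$, and a fresh token created in algorithm \ref{algo-feedbackDiv}) are ordinary clusters to which the lemma applies, and they do not start as singletons. Your base case covers only clusters created by a free node in algorithm \ref{algo-create}, and your inductive step covers only recruitment and send-back moves, so the offspring clusters are simply outside your induction --- saying that division ``falls outside the scope of the statement'' conflates the excepted phase with its outcome. The paper's proof treats this case explicitly: the cluster is divided only when divisible, and the procedure $Divide$ returns words representing two disjoint subtrees of ${\cal A}=Build\_Tree(w_T)$, each of size at least $m$, so the nodes joining each new color form a connected set. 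Given the spanning-tree invariant you maintain (every edge of $\cal A$ is a genuine link of $E$, since the word is built by token moves along edges), closing this gap is a short argument --- but it must be made, and analogously the (trivial) post-dissolution state should be noted, as the paper does.

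Two secondary remarks. First, your invariant speaks of ``the (unique) token of color $c$,'' but uniqueness is established in the paper only in a later lemma (lemma \ref{unique}); its proof is independent of this one, so this is a forward dependency you should either cite or avoid. Second, your route is heavier than necessary: the paper proves bare connectivity directly by observing that a token of color $c$ is only ever sent by a node already in $V_c$ (or bounced back to such a node), so every recruit is attached to the cluster through the very edge the token traversed --- no appeal to the $Clean\_Word$/spanning-tree machinery of \cite{Bern06} is needed for this lemma. Your stronger invariant is true and is asserted elsewhere in the paper, but for this statement it buys you extra proof obligations (exactly the $Clean\_Word$ preservation issue you flag as your main obstacle) that the paper's lighter argument sidesteps.
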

\begin{proof}
Consider a cluster $V_c$. $V_c$ evolves by being created, recruiting new nodes, being dissolved, or being split.

When created, $V_c=\{c\}$ (no other node can have color $c$, since this would mean that $c$ had already created a token, in which case it can only get free again after the dissolution of the cluster).  $V_c$ is connected.

A new node is recruited to $V_c$ when it receives a token with color $c$, sent by a node that is already in $V_c$: indeed, a token with color $c$ can only be sent by a node that is already in $V_c$ (algorithm 4 line 6, algorithm 5 lines 16 and 19, algorithm 9 lines 5 and 8), or by a node neighboring $V_c$ that sends the token back to its sender (already in $V_c$). Thus, there is a link between a newly recruited node and a node that is already in $V_c$: if $V_c$ is connected, it remains so.

When $V_c$ is dissolved, a PIF is launched on one of its spanning trees, at the end of which all nodes have left $V_c$: $V_c$ is then the empty graph, which is connected.

When $V_c$ is divided, it is divided only when divisible, \emph{ie} when it can be divided into two connected clusters of size greater than $m$. Then all nodes leave $V_c$ and join their new clusters, that are connected (one of which may still have color $c$).
\end{proof}

\begin{lemma}
At least one cluster is created \emph{whp}.
\end{lemma}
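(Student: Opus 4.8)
I need to prove that "At least one cluster is created whp (with high probability)."

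Let me think about what this means in context.The plan is to show that, starting from the initial configuration, the probabilistic coin-tossing in Algorithm \ref{algo-create} forces at least one token (hence cluster) to be created with probability 1. First I would recall the structure of the initial configuration: every node is free ($col_i = null$) and no message circulates. Consequently, the only rule that is ever enabled is Algorithm \ref{algo-create}, executed by a free node upon awakening. I would argue that as long as no cluster has been created, the system remains in a state where the only possible activity is free nodes awakening and tossing coins, so the process never deadlocks into inaction.

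The core of the argument is the following observation. Each time a free node $i$ awakens, it creates a token (and hence a cluster $V_i$) with probability $\frac12$; otherwise it sleeps for a random finite time and will awaken again. I would model the sequence of awakenings of a fixed free node $i$ as an infinite sequence of independent Bernoulli trials, each succeeding with probability $\frac12$. The probability that node $i$ fails to ever create a token is therefore $\lim_{k\to\infty}\left(\frac12\right)^{k} = 0$. Hence node $i$ creates a token with probability $1$, which already establishes that at least one cluster is created \emph{whp}. I would phrase this carefully using the asynchronous model: since there is no bound on processing times but every sleep lasts a finite time, each node awakens infinitely often in any infinite execution in which it never succeeds, so the zero-probability event is exactly the event that every one of infinitely many fair coin tosses comes up the sleeping side.

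The main subtlety — and the step I expect to require the most care — is ruling out the degenerate scenario in which the execution simply \emph{stops}: that is, ensuring that in the initial configuration, free nodes actually continue to awaken rather than all permanently remaining asleep without ever tossing a coin. I would address this by appealing to the fairness assumption implicit in the asynchronous model (each enabled node is eventually scheduled), so that a free node that is asleep will eventually awaken and toss its coin. Once this fairness is invoked, the Borel--Cantelli-style computation above closes the argument. A secondary, minor point to state is that once a first token is created the cluster $V_i$ genuinely exists (node $i$ sets $col_i \rcp i$ and $w_i \rcp {<}i{>}$ in lines \ref{create-deb}--\ref{create-fin} of Algorithm \ref{algo-create}), so the creation of a token is indeed the creation of a cluster in the sense of the Cluster definition, completing the proof.
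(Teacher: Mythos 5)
Your proposal is correct and follows essentially the same route as the paper's proof: the paper likewise argues that, as long as no cluster exists, free nodes awaken and create a token with probability $\frac12$ infinitely often, so \emph{whp} a token is eventually created and starts building a cluster. Your additional explicit treatment of the fairness assumption and the $\left(\frac12\right)^k \to 0$ computation merely spells out details the paper leaves implicit.
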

\begin{proof}
If there is no cluster in the network, on a free node awakening, a token is created infinitely often with probability $1/2$: eventually, \emph{whp}, a token is created. Then this token starts building a cluster (\emph{cf.} Algorithm \ref{algo-create}). 
\end{proof}

\begin{lemma}\label{unique}
Eventually, a token of color $c$ exists if and only if $V_c\neq\emptyset$. In this case, this token is unique and is in $N({V_c})$.
\end{lemma}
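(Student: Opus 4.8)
The plan is to establish the three assertions — the correspondence between existing tokens and non-empty clusters, uniqueness, and membership in $N(V_c)$ — as an invariant maintained along every execution, proved by induction on the sequence of transitions starting from an initial configuration. In an initial configuration there are no messages and every node is free, so no token and no non-empty cluster exist, and all three assertions hold vacuously. I would then verify that each rule preserves the invariant, handling the propagation-with-feedback phases (dissolution and division) as macro-operations whose transient intermediate configurations are treated separately.

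For the location assertion, I would show that a token of color $c$ is only ever held by a node of $N(V_c)$. By the previous lemma and by inspection of Algorithm \ref{algo-token}, a token of color $c$ is only ever emitted by a node already in $V_c$: a recruiting node forwards it to some neighbour $j\in N_i$, while a free, blue, or locked recipient either joins $V_c$ (thereby entering $V_c$) or returns the token to $w_T[1]$, which is the last node that inserted itself and hence lies in $V_c$. In every case the current holder is either in $V_c$ or adjacent to a sender in $V_c$, so it lies in $N(V_c)$; the same holds for the tokens freshly emitted in Algorithms \ref{algo-create} and \ref{algo-feedbackDiv}.

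For the correspondence, I would track the events that change the set of tokens and the set of non-empty clusters in lockstep. A token of color $c$ and the membership $c\in V_c$ are born together in Algorithm \ref{algo-create}; recruiting (the join branch of Algorithm \ref{algo-token}) only enlarges $V_c$ while keeping exactly one token of color $c$; a dissolution (Algorithms \ref{algo-diss} and \ref{algo-feedbackDiss}) consumes the token of color $c$ and, once the PIF completes, empties $V_c$; a division (Algorithms \ref{algo-div} and \ref{algo-feedbackDiv}) consumes the token of color $c=col_T$ and, at the end of its feedback, emits exactly the two new tokens whose colors match the two freshly formed clusters. Thus at every configuration in which only $Token$ messages circulate the correspondence holds, whereas during a dissolution or division it is temporarily suspended — which is exactly why the statement is phrased with \emph{eventually}. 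I would invoke the standard termination argument for propagation of information with feedback on the finite spanning tree $\cal A$ to conclude that each such phase completes, restoring the correspondence.

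Uniqueness follows from the same bookkeeping: a token of color $c$ is created only by node $c$ (which requires $col_c=null$, hence by induction $V_c=\emptyset$ and no prior token of color $c$) or as an output $w_1[1]$ or $w_2[1]$ of a division, which simultaneously consumes the previous token of color $col_T$. In either case the creation of a token of color $c$ coincides with the disappearance of any earlier one, so at most one token of color $c$ ever coexists. The main obstacle is precisely this transient analysis of the two feedback phases: I must show that a division does not transiently produce two tokens of the same color despite color reuse (the case $w_2[1]=col_T$, where the original creator remains root of the second subtree and $V_c$ stays non-empty), and that no token of color $c$ survives a dissolution or is duplicated by a division acting elsewhere. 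This amounts to carefully pairing each token destruction with the creation it enables, and ruling out two overlapping feedback phases operating on the same color.
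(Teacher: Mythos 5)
Your proposal is correct and takes essentially the same route as the paper: the paper's own proof likewise argues uniqueness and the correspondence by pairing token creations with destructions (only node $c$ can create a token of color $c$, and only after the previous one has been removed via a dissolution), and establishes the location claim by the same case analysis of the token handler (a node outside $V_c$ joins, triggers a dissolution, or sends the token back to $w_T[1]\in V_c$). If anything you are more careful than the paper, whose proof flatly asserts that the only token creation is in Algorithm \ref{algo-create} and never squarely addresses the two tokens emitted at the end of a division (Algorithm \ref{algo-feedbackDiv}); your explicit pairing of each division's token consumption with its two emissions, including the color-reuse case $w_2[1]=col_T$, fills precisely the point the paper glosses over and later patches in the adaptive version with version numbers.
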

\begin{proof}
The only token creation is on a free node executing algorithm \ref{algo-create}. The color of this token is the id of the free node that creates it. Since, after creating a token, a node has a color, it can no longer be free. The only cases when a node becomes free again are in algorithms 6 and 7 (in algorithm 8, lines 3 and 6, $w_1[1]$ and $w_2[1]$ cannot be null, since the words computed by procedure $Division$ are non-empty). To execute algorithms 6 and 7, a node has to have received a message $Dissolution$ or a message $FeedbackDiss$ respectively. A $Dissolution$ message can only be sent by its father if it has itself received this message (algorithm 6) or by a node in a stable cluster or with a higher color, in which case the token of its color has been removed (algorithm 5). Messages $FeedbackDiss$ can only be received from a son to which the node has already sent a $Dissolution$ message. Anyway, the dissolution procedure is always triggered on a cluster of color $i$ after a token of color $i$ has been removed, and it is the only phase when a node can become free. Thus, after a node $i$ has created a $Token$ with color $i$, it can create a new $Token$ only if the previous one has been removed. Since it is the only node that can create a token with color $i$, there is at most one token with color $i$ at a time in the network.

Thus, if a token of color $c$ exists, the node of id $c$ is in $V_c$. And if $V_c\neq\emptyset$, then $V_c$ contains $c$ and there is a token of color $c$.

If a node outside $V_c$ receives a $Token$ with color $c$, it either joins $V_c$ (if it is free; algorithm 5, line 9), or triggers the removal of this token and the dissolution of $V_c$ (algorithm 5, line 25), or sends it back to its sender (algorithm 5, line 27). Since the token starts in $V_c$ (algorithm 4), the token can only be in $V_c$, or on a neighboring node.
\end{proof}

\begin{lemma}\label{clusterstable}
Eventually, one cluster (at least) is stable \emph{whp}.
\end{lemma}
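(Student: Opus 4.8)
The plan is to argue by contradiction: suppose that, \emph{whp}, no cluster ever becomes stable, so that every cluster permanently has fewer than $m$ nodes. I would first record what this forces on the three mechanisms. Since a division requires a spanning tree splittable into two parts each with at least $m$ identities (hence a cluster of at least $2m$ nodes), no division ever fires. Moreover, in the dissolution test of Algorithm \ref{algo-token} (line \ref{token-dissDeb}) the clause $Nb\_Identities(w_i)\geq m$ can never hold, because $w_i$ only holds identifiers of the small cluster of $i$; the test therefore degenerates to $Nb\_Identities(w_T)<m \wedge col_i>col_T$. In other words, under the contradiction hypothesis a cluster is dissolved \emph{exactly} when its own token visits a node belonging to a strictly higher-colored cluster.

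Next I would isolate a cluster that cannot be destroyed. Once at least one cluster has been created (which happens \emph{whp}), consider the largest color $c$ currently carried by a cluster. By the previous paragraph $V_c$ is immune to dissolution (no node of color exceeding $c$ exists) and, being small, is never divided; hence it persists. The only event that can change the maximal color present is the creation, by some free node $j>c$, of a new cluster of color $j$. Since colors are node identifiers, the maximal color present is a non-decreasing, integer-valued process bounded by the largest identifier in the network; it therefore converges to some value $c^\ast$. From that time on $V_{c^\ast}$ is permanent, unique (Lemma \ref{unique}) and non-dissolvable.

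Finally I would show that $V_{c^\ast}$ grows past $m$, contradicting $|V_{c^\ast}|<m$. As the graph is connected and $|V_{c^\ast}|<m\le n$, the cluster has a non-empty boundary, every node of which is non-free (a free neighbor would already be annexable). A locked boundary node eventually becomes free (no node stays locked forever), and a free boundary node is, \emph{whp}, eventually reached by the random walk of the token of color $c^\ast$ and annexed. A boundary node lying in a weaker cluster $D$ (color $<c^\ast$) is handled indirectly: \emph{whp} the token of $D$ eventually crosses the edge between $D$ and $V_{c^\ast}$, and the $c^\ast$-colored endpoint then triggers the dissolution of $D$ (its color exceeds that of $D$, and $D$ is small), after which $D$'s nodes become free and are in turn annexed by $V_{c^\ast}$. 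In every case $V_{c^\ast}$ eventually gains a node, so it cannot remain below $m$ forever, which is the desired contradiction.

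The step I expect to be the main obstacle is the convergence argument for the maximal present color together with the \emph{whp} bookkeeping. I must check that the maximal-color cluster really is immune to every destruction event (using that the $\geq m$ dissolution clause is vacuous and that division needs size $2m$), that monotonicity of the maximal color is not broken by the churn of smaller clusters being created and dissolved, and that the two hitting events on which growth relies --- the token of $c^\ast$ reaching free boundary nodes, and the tokens of neighboring weaker clusters reaching the shared boundary to trigger their own dissolution --- each occur with probability tending to $1$, so that their combination still holds \emph{whp}. I would also lean on Lemma \ref{unique} to make ``the token of color $c^\ast$'' well defined and confined to $N(V_{c^\ast})$, and on termination of the dissolution propagation with feedback to guarantee that dissolved neighbors actually release their nodes.
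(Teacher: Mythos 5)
Your proposal is correct and follows essentially the same route as the paper's proof: argue by contradiction, observe that the dissolution guard reduces to the color comparison (the $Nb\_Identities(w_i)\geq m$ clause being vacuous when no cluster is ever stable), single out the highest-colored cluster as immune to dissolution and division, and use the hitting properties of the independent token random walks to show it keeps recruiting freed neighbors until it reaches size $m$. The only cosmetic difference is that the paper fixes $c$ as the highest color appearing in the whole execution, whereas you recover the same anchor dynamically as the limit of the non-decreasing, bounded maximal present color --- a valid variation, not a different argument.
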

\begin{proof}
In a given execution of this algorithm, there is a finite number of clusters colors used (at most, as many colors as nodes in the system, since the color of a cluster is the ID of the node that initiated it).

Consider $c$ the highest color appearing in an execution. Suppose, for the sake of contradiction, that no cluster ever becomes stable.

Suppose that, in the execution, a cluster with color $c$ is dissolved: then, the dissolution process can only have been initiated by a clustered node neighboring $V_c$, on the reception of the token with color $i$ (algorithm 5, line 25). It launches the dissolution only if it is in a cluster with a higher color than $c$, which is discarded by the definition of $c$, or if it is in a stable cluster (line 23). Thus, a cluster of color $c$ cannot be dissolved.

Now, suppose that no cluster with color $c$ is dissolved. Then, the token with color $c$ is never removed (see proof of the previous lemma). Thus, the token follows an infinite path. Each time the token hits a free node, this node is added to the cluster: $|V_c|$ is incremented, and the node is no longer free, and will remain in $V_c$ until a division occurs (by assumption, $V_c$ is not dissolved in the considered execution), which is possible only if the cluster is stable. Now, \emph{whp}, any unstable neighboring cluster is dissolved, when its token reaches a node in $V_c$, for example (which will occur \emph{whp} according to the hitting property of random walks). At this moment, there are free nodes in $N(V_c)$. The token with color $c$ following a random walk on nodes that are either free or in $V_c$, it will reach such a node \emph{whp} and recruit it to $V_c$. Since the moves of the different tokens are independent, \emph{whp}, the token will be able to recruit a new node. Once $m$ nodes are recruited, $V_c$ is stable, which contradicts the assumption.

Thus, \emph{whp}, at least one cluster is eventually stable.
\end{proof}

\begin{lemma}\label{stablenode}
A node belonging to a stable cluster remains in a stable cluster.
\end{lemma}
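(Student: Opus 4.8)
The plan is to follow the color variable $col_i$ of a node $i$ that is in a stable cluster and verify, rule by rule, that $col_i$ can never take the value $null$ or $-1$, and that whenever it does change it places $i$ in another cluster of size at least $m$. The variable $col_i$ is only written by Algorithms \ref{algo-token}, \ref{algo-diss}, \ref{algo-feedbackDiss} and \ref{algo-div} (Algorithm \ref{algo-create} is guarded by $col_i=null$ and hence never fires on a clustered node, while Algorithm \ref{algo-feedbackDiv} leaves $col_i$ untouched). So I would group the possible transitions into three families: recruitment/forwarding (Algorithm \ref{algo-token}), division (Algorithms \ref{algo-token} and \ref{algo-div}), and dissolution (Algorithms \ref{algo-diss} and \ref{algo-feedbackDiss}), and treat each family in turn.

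The heart of the argument is that \emph{a stable cluster is never dissolved}. A $Dissolution$ message for a cluster $V_c$ is created only at line \ref{token-dissFin} of Algorithm \ref{algo-token}, under the guard $Nb\_Identities(w_T)<m$ of line \ref{token-dissDeb}, where $w_T$ is the word carried by the unique token of color $c$ (Lemma \ref{unique}). Using the growing-phase invariant recalled before the lemmas — that after the $Clean\_Word$ step $w_T$ represents a spanning tree of the whole cluster $V_c$ — we have $Nb\_Identities(w_T)=|V_c|$, so if $V_c$ is stable the guard is false and no dissolution of $V_c$ is ever launched. I would reinforce this with a timing remark: initiating a dissolution replaces the token by a $Dissolution$ message (line \ref{token-dissFin}), after which $V_c$ can no longer recruit, so a cluster that was not stable when its dissolution began cannot become stable afterwards. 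Since $Dissolution$ and $FeedbackDiss$ messages only travel inside the (node‑disjoint) cluster being dissolved, $i$ never receives one while in a stable cluster, and hence $col_i$ never becomes $-1$ or $null$.

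It then remains to check that the surviving transitions keep $i$ in a cluster of size at least $m$. In Algorithm \ref{algo-token}, if $i$ is blue and receives a token of another color it either dissolves that other cluster or sends the token back (lines \ref{token-dissFin}--\ref{token-sendBack}) without altering its own $col_i$; and if $i$ is free or red and does not divide, it simply (re)joins $V_c$, which only grows. The one delicate case is division: when $V_c$ is divided, the $Division$ propagation runs on the tree ${\cal A}=Build\_Tree(w_T)$, a spanning tree of the entire $V_c$, so every node of $V_c$ — in particular $i$ — lies in ${\cal A}$, receives a $Division$ message, and adopts the color $w_1[1]$ or $w_2[1]$ (Algorithm \ref{algo-div}). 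Because $w_1,w_2$ partition ${\cal A}$ and the $Divide$ procedure guarantees $Nb\_Identities(w_1)\geq m$ and $Nb\_Identities(w_2)\geq m$, both resulting clusters have size at least $m$, so $i$ lands in a stable one. Combining the three families, every write to $col_i$ leaves $i$ in a cluster of size $\geq m$, and a straightforward induction on the sequence of rules executed yields the claim.

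The step I expect to be the main obstacle is precisely the dissolution argument, and within it the justification that $Nb\_Identities(w_T)=|V_c|$ for the live token of a stable cluster. One must be certain that $Clean\_Word$ never drops a genuine member from the token's word — so that the size test at line \ref{token-dissDeb} truly measures $|V_c|$ — and that a neighbor cannot evaluate the dissolution guard against a stale, too‑short word at the wrong moment. Nailing down this invariant, rather than the routine bookkeeping over the remaining rules, is where the real care is required.
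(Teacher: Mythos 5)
Your proof is correct and follows essentially the same route as the paper's: a case analysis over every rule that writes $col_i$, ruling out dissolution for a stable cluster and observing that division (via the guarantee $Nb\_Identities(w_1)\geq m$ and $Nb\_Identities(w_2)\geq m$ of the $Divide$ procedure) places the node in one of two clusters of size at least $m$. You are in fact somewhat more careful than the paper, which simply asserts that dissolving a stable cluster ``is impossible'' where you justify it via the invariant $Nb\_Identities(w_T)=|V_c|$ maintained by $Add\_Begin$ and $Clean\_Word$ — a welcome addition, not a divergence.
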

\begin{proof}
Consider a node $i$ belonging to a stable cluster of color $c$. Since its color is not $null$, $i$ can only change its color in algorithm 5 line 9, algorithms 6 and 7, or algorithm 8, line 3 and 6. Algorithms 6 and 7 correspond to a dissolution phase of $V_c$, which is impossible since $V_c$ is stable. Algorithm 5 line 9, and algorithm 8, are part of the division process, that divides a stable cluster into two stable clusters (algorithm 5, line 5: the process is launched only when the tree is splittable into two subtrees with sizes greater than $m$). Thus, the cluster to which $i$ belongs after the division is still stable.
\end{proof}

\begin{lemma}
A divisible cluster is eventually divided \emph{whp}.
\end{lemma}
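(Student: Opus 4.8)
The plan is to show that a divisible cluster persists with a perpetually circulating token until it is divided, and then that the random walk of this token eventually, \emph{whp}, produces in the circulating word a divisible spanning tree, which immediately triggers a division.

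First I would argue that a divisible cluster $V_c$ is never dissolved and stays divisible until it is divided. By definition a divisible cluster splits into two connected parts of size at least $m$, so $|V_c|\geq 2m>m$ and $V_c$ is stable. By Lemma~\ref{stablenode} its nodes stay in a stable cluster, and since the dissolution mechanism is triggered only on non-stable clusters (algorithm~\ref{algo-token}, line~\ref{token-dissDeb}), $V_c$ is never dissolved. Between creation and a possible division the cluster only recruits nodes, and adding nodes to a graph that already admits a split into two connected parts of size $\geq m$ preserves that split; hence $V_c$ stays divisible. By Lemma~\ref{unique}, as long as $V_c$ is neither dissolved nor divided, there is a unique token of color $c$, it is never removed, and it keeps moving according to its random walk forever.

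Next I would reduce the claim to a statement about the circulating word. A division is launched exactly when the token arrives at a node $i$ and the cleaned word $w_T$ (after $i$ has been prepended and $Clean\_Word$ applied) satisfies $Is\_Divisible(w_T)$ (algorithm~\ref{algo-token}, line~\ref{token-divDeb}), i.e.\ the tree $Build\_Tree(w_T)$ has an edge whose removal yields two subtrees of size $\geq m$. Since the cleaned word always encodes a spanning tree of $V_c$ rooted at the current node, it suffices to prove that, \emph{whp}, the token eventually reaches a node at which this spanning tree is divisible. To produce such a configuration I would exhibit a concrete finite walk: fixing a split of $V_c$ into connected parts $V_1,V_2$ of size $\geq m$ with a crossing edge $(a,b)$, $a\in V_1$, $b\in V_2$, the property that a divisible cluster admits a divisible spanning tree gives a spanning tree $\mathcal A^\ast$ of $G_c$ one of whose edges separates it into two subtrees of size $\geq m$. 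I would then describe a finite sequence of moves $\sigma$ inside $G_c$ (a traversal realizing $\mathcal A^\ast$ and ending at its root) such that after the token executes $\sigma$ the word produced by $Clean\_Word$ encodes $\mathcal A^\ast$ (or another divisible spanning tree), so that $Is\_Divisible$ returns \emph{true}. Verifying that the cleaning procedure reconstructs the intended tree from $\sigma$ is the delicate, algorithm-specific step, and I expect it to be the main obstacle.

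Finally, the probabilistic conclusion is routine. The token never stops moving, and from any node each specific next step toward an in-cluster neighbor has probability $1/degree(i)$, so any fixed finite trajectory inside $G_c$ has positive probability at each attempt; excursions to neighbors outside $V_c$ either bounce back (locked or foreign neighbors) or only enlarge $V_c$ while preserving divisibility. Hence, by the hitting property of random walks on a connected graph already invoked in the preceding lemmas, the walk $\sigma$ is realized with probability $1$, at which point a division is launched. Therefore a divisible cluster is eventually divided \emph{whp}. The persistence argument and the random-walk hitting argument are straightforward; the weight of the proof rests on the construction and verification of $\sigma$ against the exact behavior of $Clean\_Word$.
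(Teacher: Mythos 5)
Your proposal is correct in substance but takes a genuinely different route from the paper at the key probabilistic step. The paper disposes of the obstacle you flag in one stroke: once the (stable, hence never dissolved) cluster has stopped growing, the token --- rebounds aside --- performs a random walk on $V_c$, and the tree encoded by the circulating word is the tree of the last minimal terminal covering path of that walk; by Aldous's result \cite{Aldo90} this is a random spanning tree of $V_c$, so \emph{every} spanning tree --- in particular a divisible one, which exists by Property 1 --- is eventually produced \emph{whp}, triggering the division at algorithm \ref{algo-token} lines \ref{token-divDeb} to \ref{token-divFin}. Your replacement (exhibit a fixed finite trajectory $\sigma$ whose execution leaves a divisible tree in the word, then note each attempt has probability at least $\prod 1/\deg>0$ and the perpetually circulating token makes infinitely many attempts) is more elementary and self-contained, since it avoids the uniform-spanning-tree machinery; its cost is the verification you explicitly leave open. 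That verification is in fact routine and does not depend delicately on $Clean\_Word$: in the tree returned by $Build\_Tree$, the father of $v$ is the identifier preceding the \emph{first} occurrence of $v$ in the word, i.e., the node the token moved to immediately after its \emph{last} visit to $v$, so the word's tree is exactly the last-exit tree of the walk. Taking $\sigma$ to be a depth-first traversal of the divisible spanning tree $\mathcal A^\ast$ given by Property 1, starting and ending at its root, realizes $\mathcal A^\ast$: after finishing $v$'s subtree the traversal backtracks to $v$'s parent, so every node's last exit points to its $\mathcal A^\ast$-parent, and $Clean\_Word$ preserves the built tree by construction. (For the growth of $V_c$ during attempts, either your observation that newly recruited nodes attach on one side of the dividing edge and only enlarge the two parts suffices, or you can borrow the paper's simpler device of first waiting until the nondecreasing, bounded cluster size reaches its maximum.) Your persistence argument --- stability via Lemma \ref{stablenode} and the dissolution guard at line \ref{token-dissDeb}, token uniqueness and survival via Lemma \ref{unique} --- matches the paper's. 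With the depth-first-traversal verification filled in, your argument is complete and arguably more transparent than the paper's one-line appeal to \cite{Aldo90}.
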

\begin{proof}
Consider a divisible cluster $V_c$, and suppose it is never divided. Since it is stable, its size can only increase. Suppose it has reached its maximal size. Then, its token browses $N(V_c)$, and if we put apart the steps when the token goes out of $V_c$ and is sent back, it follows a random walk on $V_c$. The spanning tree of $V_c$ is computed according to the last minimal terminal covering path of the token: according to \cite{Aldo90}, it contains a random spanning tree of $V_c$. Thus, any spanning tree of $V_c$ is computed \emph{whp}. Now, at least one spanning tree is divisible (property 1).

Thus, \emph{whp}, the tree computed from the circulating word is eventually divisible, and the cluster is divided according to algorithm 5 line 7 to 11.
\end{proof}

\begin{lemma}
The number of nodes in stable clusters grows to $n$ \emph{whp}.
\end{lemma}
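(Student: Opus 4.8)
The plan is to show that the set $S$ of nodes currently belonging to a stable cluster grows monotonically up to all of $V$. First I would observe that $|S|$ never decreases: by Lemma~\ref{stablenode} a node in a stable cluster stays in a stable cluster, and the only event that moves such a node is a division, which splits a stable cluster into two stable clusters (algorithm~\ref{algo-token}, line~\ref{token-divDeb} onward), so no node ever leaves $S$. Hence $|S|$ is a non-decreasing integer bounded by $n$ and is therefore eventually constant, equal to some value $k$, after which the set $S$ itself is frozen. By Lemma~\ref{clusterstable}, $S\neq\emptyset$ \emph{whp}. I would then assume, for contradiction, that $k<n$; then $\overline S=V\setminus S$ is non-empty, and since $G$ is connected and $S\neq\emptyset$, there is an edge between $S$ and $\overline S$, i.e.\ a stable cluster $V_c$ with a neighbour $u\in\overline S$.

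Next I would track the status of this boundary node $u$ after stabilization. Since $u\notin S$, it is never in a stable cluster, so at each instant it is free, locked, or in an unstable cluster. The locked state is transient: every dissolution PIF terminates by freeing each node it had locked (algorithms~\ref{algo-diss} and~\ref{algo-feedbackDiss}), so I can disregard it. The key case is when $u$ is free: because $V_c$ is stable it is never dissolved, so by Lemma~\ref{unique} its token circulates forever in $N(V_c)$; by the hitting property of random walks it \emph{whp} reaches the node of $V_c$ adjacent to $u$ and then moves onto $u$, annexing it. This annexation may trigger a division, but a division only produces stable clusters, so in either case $u$ enters $S$ --- contradicting that $S$ is frozen.

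It then remains to drive the remaining case back to the free case. Suppose $u$ lies in an unstable cluster $V_{c'}$; since $u$ neighbours $V_c$, $V_{c'}$ is adjacent to $S$. Either $V_{c'}$ eventually reaches size $m$ and becomes stable, which immediately places $u$ in $S$ (contradiction), or it stays unstable, in which case its token \emph{whp} reaches a node $i\in V_c\subseteq S$; since that node satisfies $Nb\_Identities(w_i)\ge m$, the dissolution of $V_{c'}$ is triggered (algorithm~\ref{algo-token}, lines~\ref{token-dissDeb}--\ref{token-dissFin}). When this dissolution completes, all nodes of $V_{c'}$, and in particular $u$, become free, returning us to the free case above. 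Thus in every case, \emph{whp}, some node of $\overline S$ eventually joins $S$, contradicting $k<n$; hence $k=n$.

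I expect the main obstacle to be the ``free $\Rightarrow$ annexed'' step, which is really a race under the asynchronous scheduler: once $u$ becomes free it might be recaptured by a freshly created unstable cluster before $V_c$'s token arrives. The plan is to handle this exactly as in the proof of Lemma~\ref{clusterstable}: the configuration ``$u$ free and $V_c$'s token nearby'' recurs, and since the random moves of the different tokens are mutually independent, each recurrence carries a probability bounded away from $0$ of completing the annexation, so \emph{whp} one of them succeeds. Turning this informal independence-plus-recurrence argument into a fully rigorous bound against an adversarial asynchronous schedule is where the real care is needed.
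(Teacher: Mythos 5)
Your proposal is correct and follows essentially the same route as the paper's own proof: a boundary node adjacent to a stable cluster that persists (Lemmas~\ref{clusterstable} and~\ref{stablenode}), dissolution \emph{whp} of any unstable cluster containing that node once its token hits the stable cluster, and recruitment \emph{whp} of the node while free via the hitting property and the independence of the tokens' moves. Your monotone frozen-set formalization of $S$, the explicit treatment of the locked state, and your honest flagging of the asynchronous race are slightly more careful renderings of steps the paper leaves informal; the only over-claim is asserting $Nb\_Identities(w_i)\geq m$ outright for a node of the stable cluster (the stored $w_i$ may be a stale word from before the cluster reached size $m$, and becomes large enough only \emph{whp} after the stable token revisits $i$), a gloss the paper itself also makes.
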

\begin{proof}
According to lemma \ref{clusterstable}, eventually and \emph{whp}, there exists a stable cluster. Suppose that some nodes never belong to a stable cluster, and consider a node $i$ that never belongs to a stable cluster, and neighboring a node $j$ in a stable cluster. Such a node exists since the network is connected. $j$ remains in a stable cluster according to lemma \ref{stablenode}. At some point, and by the assumption above, the algorithm stops recruiting new nodes to stable clusters. Then, at some point, stable clusters do not evolve any longer:
\begin{itemize}
\item they cannot recruit new nodes;
\item they cannot be dissolved since they are stable;
\item they can no longer be split since they do not grow any more (once all splittable clusters are split, which eventually happens --- see demonstration of the previous lemma --- they can no longer be split).
\end{itemize}

Note $V_c$ the cluster of $j$.

There is a random walk browsing $N(V_c)$, and, \emph{whp}, it will hit $i$ infinitely often. Since $i$ is not in a stable cluster, \emph{whp} it is infinitely often free (indeed, if it is in a cluster, the token in this cluster must hit $V_c$ infinitely often, and the cluster is dissolved infinitely often, which makes $i$ free). Thus, \emph{whp} $i$ is eventually hit by the token of $V_c$ while being free (by independency of the moves of the different tokens), and is then recruited to the stable cluster $V_c$, which contradicts the assumption.

Thus, all nodes eventually belong to a stable cluster.
\end{proof}

\begin{lemma}\label{st}
All clusters eventually have a size greater than $m$, and are not divisible \emph{whp}.
\end{lemma}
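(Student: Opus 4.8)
The plan is to assemble the preceding lemmas into the two required properties: that every cluster eventually has size at least $m$, and that eventually no cluster is divisible. I would treat these in sequence, since establishing the first yields a clean setting in which to argue the second. For the size bound I would invoke the lemma stating that the number of nodes in stable clusters grows to $n$ \emph{whp}. Once every node belongs to a stable cluster, every cluster is by definition of size at least $m$, since a node sits in $V_c$ only when $|V_c|\geq m$. By Lemma~\ref{stablenode} a node that is in a stable cluster remains in one forever, so this property is preserved under all subsequent evolution. In particular, when a stable cluster is divided it splits into two clusters each of size at least $m$ (the division is launched only when the tree splits into two subtrees of size $\geq m$), so the invariant ``all clusters are stable'' is maintained by divisions as well.

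For non-divisibility, I would work from the moment at which all nodes are stable. At that point the system can no longer create tokens, since Algorithm~\ref{algo-create} fires only on free nodes and none remain, nor dissolve any cluster, since by the proof of Lemma~\ref{unique} a dissolution is triggered only against an unstable cluster (Algorithm~\ref{algo-token}, line~\ref{token-dissDeb}). The only remaining cluster-modifying operation is therefore division. Each division replaces one cluster by two and hence increases the number of clusters by exactly one; since every cluster has size at least $m$ and there are $n$ nodes in total, the number of clusters never exceeds $n/m$. Consequently only finitely many divisions can occur after all nodes become stable.

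Finally I would combine this finiteness with the lemma asserting that a divisible cluster is eventually divided \emph{whp}. If some cluster remained divisible forever, that lemma would force yet another division, contradicting the bound on the total number of divisions available. Hence \emph{whp} the system reaches a state in which no cluster is divisible, and this state persists, because from it no further division, dissolution, or token creation is possible and the tokens merely circulate within their clusters by Lemma~\ref{unique}.

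I expect the main obstacle to be precisely this termination argument for the division cascade: one must rule out an infinite sequence of divisions, and the crucial monotone quantity is the number of clusters, which strictly increases under each division yet is bounded above by $n/m$. Care is also needed to order the two arguments correctly, establishing global stability first so that dissolutions and fresh token creation are excluded, so that the cluster count is genuinely non-decreasing when the divisibility argument is carried out.
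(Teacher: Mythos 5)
Your proof is correct and takes essentially the same route as the paper's, which simply combines the lemma that all nodes eventually lie in stable clusters \emph{whp} with the lemma that every divisible cluster is eventually divided \emph{whp}. Your extra bookkeeping --- that after global stability no token creation or dissolution can occur, and that the division cascade terminates because each division increases the cluster count by one while the count is bounded by $n/m$ --- is sound and merely makes explicit what the paper's two-sentence proof leaves implicit.
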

\begin{proof}
Once all node are in stable clusters, all clusters have a size greater than $m$ by definition. Then, according to lemma 6, all divisible clusters are divided \emph{whp}: eventually none of them is divisible.
\end{proof}

\begin{theorem}
The algorithm converges to a legitimate state \emph{whp}.
\end{theorem}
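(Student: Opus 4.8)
The plan is to assemble the preceding lemmas into a verification of each of the six clauses in the definition of a legitimate configuration, and then to check that the set of legitimate configurations is \emph{closed} under the algorithm's transitions, so that ``converges'' is meaningful even though the tokens circulate forever.

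First I would invoke Lemma 7 to obtain, eventually and \emph{whp}, that every node belongs to a stable cluster; since each node holds a single color $col_i$, it belongs to exactly one cluster, which gives the first clause. Combined with Lemma \ref{st}, all these clusters have size greater than $m$ and are non-divisible (clauses 3 and 4). By Lemma \ref{stablenode}, once a node is in a stable cluster it stays in one, so no dissolution ever fires again; and by Lemma 6 every divisible cluster has already been split, so no division fires either. With neither dissolution nor division in progress, Lemma 1 yields that every cluster is connected (clause 2), and Lemma \ref{unique} yields exactly one token of color $c$ in $N(V_c)$ for each non-empty $V_c$ (clause 5).

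Next I would dispatch the last clause, that only $Token$ messages circulate. The control messages $Dissolution$, $FeedbackDiss$, $Division$ and $FeedbackDiv$ are emitted only while a dissolution or division is underway, and each such phase is a propagation of information with feedback on a finite tree and therefore terminates in finitely many steps. Since, by the argument above, no new dissolution or division is ever launched once all clusters are stable and non-divisible, every control message still in flight belongs to one of finitely many already-initiated phases; these drain out, after which only $Token$ messages remain.

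Finally --- and this is where I expect the real work to lie --- I would verify that the legitimate set is absorbing, so the system cannot leave it. A $Token$ of color $c$ received by a node already in $V_c$ merely re-prepends the identifier, cleans the word, finds the resulting tree non-divisible (clause 4) and forwards the token, preserving every clause. When a token of color $c$ crosses into a neighboring cluster, that neighbor is \emph{blue} and the red cluster is stable, so $Nb\_Identities(w_T)\geq m$, the dissolution test at line \ref{token-dissDeb} fails, and the token is sent back unchanged. Thus no rule applicable in a legitimate configuration ever breaks legitimacy: the legitimate set is closed. Assembling this, the six clauses form a conjunction of finitely many events each holding eventually \emph{whp}, so by a union bound on the failure probabilities the conjunction also holds eventually \emph{whp}; once it holds, the configuration is legitimate and remains so, which is exactly convergence to a legitimate state \emph{whp}.
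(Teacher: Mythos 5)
Your proof is correct and follows the same basic route as the paper, whose entire recorded proof is the single sentence ``This comes from lemmas 1, 7 and 8'' (connectivity outside dissolution/division, all nodes eventually in stable clusters, and all clusters eventually stable and non-divisible). The difference is one of completeness rather than of method: the paper's three cited lemmas only cover the first four clauses of legitimacy, whereas you additionally discharge clause 5 via Lemma \ref{unique}, clause 6 via the drain-out argument (control messages arise only from finitely many already-launched PIF phases on finite trees, and no new phase is launched once all clusters are stable and non-divisible), and you verify that the legitimate set is absorbing. None of this appears explicitly in the paper, and all of it is genuinely needed: without closure, ``eventually legitimate'' would not be stable under the perpetual token circulation, and without the drain-out step the ``only $Token$ messages'' clause is unsupported. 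One point worth making explicit in your closure argument: the reason the dissolution guard $Nb\_Identities(w_T)\geq m$ cannot fire in a legitimate configuration is that a node joins $V_c$ exactly when the token visits it and prepends its identifier, and $Clean\_Word$ never erases the last occurrence of a cluster member, so in a legitimate configuration $w_T$ represents a spanning tree of all of $V_c$ and $Nb\_Identities(w_T)=|V_c|\geq m$; similarly, non-divisibility of the cluster rules out any divisible spanning tree (the contrapositive of Property 1), so $Is\_Divisible$ returns false. With those two facts pinned down, your argument is a strictly more careful version of the paper's, and arguably the one the paper should have written.
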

\begin{proof}
This comes from lemmas 1, 7 and 8.
\end{proof}

\begin{corollary}
Consider an integer $M\geq 2m$.
If $G$ is such that all its connected subgraphs of size greater than $M$ can be partitioned into two connected subgraphs of size greater than $m$, then the algorithm converges \emph{whp} to a clustering such that all clusters have a size between $m$ and $M-1$.
\end{corollary}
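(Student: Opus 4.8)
The plan is to obtain the corollary as an essentially immediate consequence of the Theorem, using the structural hypothesis on $G$ only to turn the qualitative property \emph{``non-divisible''} into a quantitative bound on cluster sizes. First I would invoke the Theorem: \emph{whp} the algorithm converges to a legitimate configuration, in which every node belongs to exactly one cluster, every cluster is connected, every cluster is stable (hence of size at least $m$), and no cluster is divisible. The lower bound in the statement is then nothing more than stability, so no further work is needed there.

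For the upper bound I would reason by contradiction inside a fixed legitimate configuration. Suppose a cluster $V_c$ has size greater than $M$. By legitimacy $V_c$ is connected, so $G_c$ is a connected subgraph of $G$ of size greater than $M$; the hypothesis on $G$ then partitions $G_c$ into two connected subgraphs each of size greater than $m$. But this is exactly the definition of a divisible cluster, contradicting the non-divisibility guaranteed by the legitimate configuration. Hence every cluster has size at most $M$, and together with the lower bound every cluster lies in the required range.

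The one point I expect to require care --- and the main (if modest) obstacle --- is the exact value of the upper constant, that is, whether one concludes $|V_c|\le M$ or the sharper $|V_c|\le M-1$ asserted in the statement. This rests entirely on how the divisibility threshold in the hypothesis is read: taking it as \emph{``size at least $M$''} immediately yields $M-1$, whereas the strict reading \emph{``size greater than $M$''} yields only $\le M$ and forces one to also rule out the boundary case $|V_c|=M$ by hand. I would reconcile the two by noting that the assumption $M\ge 2m$ is precisely what makes the hypothesis well posed, since a connected subgraph of size exceeding $M\ge 2m$ splits into two parts whose sizes sum to more than $2m$, each therefore being allowed to exceed $m$; aligning this threshold with the conclusion closes the gap. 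Beyond this bookkeeping there is no probabilistic content left to supply: the \emph{whp} convergence is already furnished by the Theorem, and the remainder is deterministic.
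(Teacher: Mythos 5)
Your proposal is correct and takes essentially the same route as the paper, whose proof is the one-line observation that under the hypothesis any cluster of size greater than $M$ is divisible, which together with the convergence theorem (legitimate configurations have stable, non-divisible clusters) gives the bounds. The boundary issue you flag is settled by the paper's convention that \emph{``greater than''} means $\geq$ (compare the definition of a stable cluster, where ``size greater than $m$'' is formalized as $|V_c|\geq m$), so reading the hypothesis as applying to subgraphs of size at least $M$ yields the stated upper bound $M-1$ with no extra work.
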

\begin{proof}
Indeed, in such a graph, all clusters of size greater than $M$ are divisible.
\end{proof}

For example, on complete graphs or on rings, the algorithm leads to a clustering with clusters of sizes between $m$ and $2m-1$.

\section{Adaptive algorithm}


This algorithm can be made adaptive to mobility. In this section, we call correct configuration any configuration that could have arisen from the execution of the algorithm on a static network corresponding to the current state of the network. More precisely:
\begin{definition}
A configuration is called correct if:\begin{itemize}
\item each cluster is connected;
\item there is exactly one token of color $c$ in $N(V_{c})$;
\item if node $i$'s color is $c\neq null$, $i$ is in the word of the token of color $c$, and only in this word; if $i$ is free, then it does not occur in any circulating word;
\item if $i$ is the son of $j$ in the tree of some token, then there is a link $(i, j)\in E$;
\item no message $Delete$ circulates.
\end{itemize}
\end{definition}

A correct configuration is not always legitimate, but, without any further topological modifications, the previous section proves that from any correct configuration, the system eventually reaches a legitimate configuration.

Mobility can manifest in the following ways:\begin{itemize}
\item link connection: the configuration remains correct, and nothing has to be done;
\item node connection: since the node that connects is initially free, it does not appear in any circulating word and the configuration remains correct;
\item node disconnection: this case is dealt like the disconnection of all adjacent links;
\item link $(i, j)$ disconnection: \begin{itemize}
	\item if this link is between two clusters ($col_i\neq col_j$, $col_i\neq null$, $col_j\neq null$): the link $(i,j)$ can appear neither on $w_{T_{col_i}}$ nor on on $w_{T_{col_j}}$; we use an acknowledgement message to detect the case when a token goes (for instance) from $i$ to $j$, and a disconnection occurs before it is sent back to $i$, leading to no token remaining in the cluster of $i$; in this case, $j$ deletes the token, and $i$ creates a token with the same content as the one of the deleted token;
	\item if one of this link extremities (or both) is a free node ($col_i=null\vee col_j=null$): since $j$ is free, the link $(i,j)$ could not appear on any circulating word: the configuration is still correct;
	\item if this link is between two nodes in the same cluster ($col_i=col_j\neq null$):\begin{itemize}
		\item if they are not linked by any father-son relationship: the link $(i,j)$ does not appear in the token of the cluster of color $col_i$, and the configuration remains correct;
		\item if $i$ is the father of $j$: the connectivity of the tree contained in the token of color $col_i$ is broken; the configuration is no longer correct, and the different mechanisms we have setup so far cannot work correctly: this is the reason why we introduce the mechanisms presented in this section, to deal with this case.
		\end{itemize}
	\end{itemize}
\end{itemize}


If $j$ is its father, $i$ deletes the subtree rooted in itself: in other words, it propagates a wave on this subtree, that makes all its descendants free. Once $i$ has set itself and all its descendant free, the token is no longer correct. When it visits $j$ for the next time, $j$ corrects it by removing the subtree rooted in any son it is supposed to have, but to which it is actually not connected. Thus, the system reaches a correct configuration.

To implement this, $i$ must always know which node is its father in the token with the same color as it. If $j$ is the father of $i$ in this tree, it means that the last time $i$ owned the token, it transferred it to $j$. Thus, $i$ can remember which node is its father.

The algorithms below describe this mobility-adaptive distributed algorithm. The disconnection of a node is dealt as the disconnection of all its adjacent links.

The initialization phase is left unmodified, except for the initialization of the variable $father$.

\begin{algorithm}[H]
{\small
\begin{algorithmic}[1]
	\STATE $col_i \rcp null $
	\STATE $w_i \rcp \varepsilon$
	\STATE $nbFeedbackDiv_i \rcp 0$
	\STATE $nbFeedbackDss_i \rcp 0$
	\STATE $father_i\rcp null$
	\STATE $version \rcp 0$
\end{algorithmic}
\caption{On initialization
\label{algo-initM}}
}
\end{algorithm}

\begin{algorithm}[H]
{\small
\begin{algorithmic}[1]
	\IF{$col_i = null$} 
		\STATE Toss a coin
		\IF{tail}
			\STATE $col_i \rcp (i, version++)$ \label{create-deb}
			\STATE $w_i \rcp {<}i{>}$
			\STATE Random choice of $j \in N_i$, Send $Token(col_i, w_i)$ to $j$ \label{create-fin}
			\STATE $father_i\rcp j$
		\ENDIF
	\ENDIF
\end{algorithmic}
\caption{On node $i$ awakening
\label{algo-createM}}
}
\end{algorithm}

When receiving a token, a node first checks whether it is correct regarding its own neighborhood. If it detects an inconsistency between the tree borne by the token and its neighborhood, it modifies the tree so that it is in accordance with the (new) topology of the system.

\begin{algorithm}[H]
{\small
\begin{algorithmic}[1]
	\STATE ${\cal A} \rcp Build\_Tree(w_T)$
	\IF{$\mathcal A$ contains an edge $(j, i) \wedge j\notin N(i)$}
		\STATE \texttt{// If $i$ is supposed to be $j$'s father, but is not actually connected to it}\\
		\STATE remove the subtree of $\mathcal A$ rooted in $j$
		\STATE $w_T\rcp Tree\_To\_Word(\mathcal A)$
	\ENDIF
	\IF{$(col_i = null \lor col_i = col_T)$ \label{token-case1}}
		\vspace*{5pt}
	\STATE \texttt{//------ Case 1: (i is free) or (i is red and receives a red token)}\\[5pt]
		\STATE $w_T \rcp Add\_Begin(i,w_T)$ \label{token-add}
		\STATE $w_T \rcp Clean\_Word(w_T)$ \label{token-check}  \\[5pt]
		\IF{$Is\_Divisible(w_T)$  \label{token-divDeb}} 
			\vspace*{5pt}
			\STATE \texttt{// If the cluster is large enough, we launch a division}\\
			\STATE $(w_1, w_2) \rcp Divide({\cal A})$ 
				\texttt{~~~~~~~ // $w_1={<}i, \ldots{>}$ and $w_2={<}r2, \ldots{>}$}
			\STATE $col_i \rcp i$
			\STATE $w_i \rcp w_1$
			\STATE $\forall j \in My\_Sons({\cal A}):$ Send 
					$Division({\cal A}, w_1, w_2)$ to $j$ \label{token-divFin} \\[5pt]
		\ELSE \label{token-joinDeb}
		\vspace*{5pt}
		\STATE \texttt{// Otherwise i joins the cluster and forwards the token}\\
			\STATE $col_i \rcp col_T$
			\STATE $w_i \rcp w_T$
			\STATE Random choice of $j \in N_i$, Send $Token (col_T, w_T)$ to $j$ \\[5pt]
			\STATE $father_i\rcp j$
		\ENDIF \label{token-joinFin}
		\vspace*{5pt}
	\ELSIF{$col_i=-1$ \label{token-falseDeb}}
			\STATE  Send $Token (col_T, w_T)$ to $w_T[1]$ \label{token-falseFin}\\[5pt]
		\STATE \texttt{//------ Case 2: i is blue and receives a red token} \label{token-case2Deb}\\[5pt]
	\ELSE
		\vspace*{5pt}
		\STATE \texttt{// If the red cluster is too small and under some asymmetric assumptions, 
			$i$ can dissolve it\\
			// Otherwise $i$ sends back the token to its sender}\\
		\IF{$Nb\_Identities(w_T)<m \land (Nb\_Identities(w_i)\geq m \lor col_i > col_T)$ \label{token-dissDeb}}
			\STATE ${\cal A} \rcp Build\_Tree(w_T)$
			\STATE  Send $Dissolution({\cal A})$ to $w_T[1]$ \label{token-dissFin}
		\ELSE 
			\STATE  Send $Token (col_T, w_T)$ to $w_T[1]$ \label{token-sendBack}
		\ENDIF 		
	\ENDIF \label{token-case2Fin}
\end{algorithmic}
\caption{On reception of $Token(col_T, w_T)$
\label{algo-tokenM}}
}
\end{algorithm}

When detecting that it can no longer communicate with its father, $i$ initiates a wave on the subtree rooted in it. $Delete$ messages are sent to all its neighbors, but only those that consider $i$ as their father take it into account. They set themselves free, and send $Delete$ messages to their neighbors, until all of $i$'s descendants are free. A node always knows its current father, so that the nodes that are set free are the current descendants of $i$. Thus, the set of nodes that are freed in this process is a subtree of the token's tree, and its complement is itself a subtree. If a freed node is visited by the token, then it is recruited, and the tree is modified according to the node to which it sends the token. Thus, the set of nodes that are free, but appear in the token, is always a subtree of the token's tree, and its complement is always a subtree too. When the token visits $i$'s father, it is corrected, and is consistent with the data present on the nodes.

\begin{algorithm}
\begin{algorithmic}[1]
\IF{$j=father_i$}
	\STATE $col_i\rcp null$
	\STATE $father_i\rcp null$
	\STATE send $Delete$ to all neighbors
\ENDIF
\end{algorithmic}
\caption{On a disconnection of node $j$}
\end{algorithm}

\begin{algorithm}
\begin{algorithmic}[1]
\IF{$j=father_i$}
	\STATE $col_i\rcp null$
	\STATE $father_i\rcp null$
	\STATE send $Delete$ to all neighbors
\ENDIF
\end{algorithmic}
\caption{On reception  of $Delete$ on node $i$ from a node $j$}
\end{algorithm}

If a dissolution occurs on a cluster that has been subject to a disconnection, the $Dissolution$ PIF is triggered on the subtree of all nodes that are in the token, and are reachable. The remaining of the cluster has already been set free, so that the cluster is eventually dissolved.

\begin{algorithm}[H]
{\small
\begin{algorithmic}[1]
	\IF{$k=father_i$}
		\STATE $col_i \rcp -1$
		\STATE $w_i \rcp \varepsilon $
	\ENDIF
	\vspace*{5pt}
	\IF{$|My\_Sons({\cal A})| > 0$}
		\STATE $\forall j \in My\_Sons({\cal A})\cap N(i) :$ Send $Dissolution({\cal A})$ to $j$
		\STATE $nbFeedbackDiss_i \rcp 0$
	\ELSE
		\STATE Send $FeedbackDiss({\cal A})$ to $My\_Father({\cal A})$
		\STATE $col_i \rcp null$
		\STATE $father_i\rcp null$
	\ENDIF
\end{algorithmic}
\caption{On the reception of $Dissolution({\cal A})$ from $k$
\label{algo-dissM}}
}
\end{algorithm}

\begin{algorithm}[H]
{\small
\begin{algorithmic}[1]
	\STATE $nbFeedbackDiss_i ++$\\
	\texttt{// if $i$ receives the last feedback it was waiting for}
	\IF{$nbFeedbackDiss_i = |My\_Sons({\cal A})\cap N(i)|$}
		\STATE send $FeedbackDiv({\cal A})$ to $My\_Father({\cal A})$
		\STATE $col_i \rcp null$
		\STATE $father_i\rcp null$
	\ENDIF
\end{algorithmic}
\caption{On the reception of $FeedbackDiss({\cal A})$
\label{algo-feedbackDissM}}
}
\end{algorithm}

If a division along an edge $(a, b)$ occurs in a cluster that has been subject to the disconnection of a link $(i, j)$, three cases can occur:\begin{enumerate}
\item $(i, j)$ is lower in the tree than $(a, b)$;
\item $(i, j)=(a, b)$;
\item $(i, j)$ is higher than $(a, b)$.
\end{enumerate}
In all these cases, the division PIF is propagated through the subtree of all the nodes that are still connected through this tree. In the first case, one of the two clusters is correct, and the other is not (its token \emph{``believes''} that some node are in the cluster, while they are not). In the second case, only one clusters exists, and it is correct. In the last case, only one cluster exists, and it is not correct (the token \emph{``believes''} that some node are in the cluster, while they are not).

\begin{algorithm}[H]
{\small
\begin{algorithmic}[1]
\IF{$k=father_i$}
	\IF{$i \in w_1$ \label{div-joinDeb}}
		\STATE $w_i\rcp w_1$
		\STATE $col_i\rcp w_1[1]$
	\ELSE
		\STATE $w_i\rcp w_2$
		\STATE $col_i\rcp w_2[1]$
	\ENDIF \label{div-joinFin}
	\vspace*{5pt}
	\IF{$|My\_Sons({\cal A})| > 0$ \label{div-pifDeb}}
		\STATE $\forall j\in My\_Sons({\cal A})\cap N(i)$ : Send $Division({\cal A},w_1,w_2)$ to $j$
		\STATE $nbFeedbackDiv_i \rcp 0$
	\ELSE
		\STATE Send $FeedbackDiv({\cal A},w_1,w_2)$ to $My\_Father({\cal A})$
	\ENDIF \label{div-pifFin}
\ELSE
	\IF{$|My\_Sons({\cal A})| > 0$}
		\STATE $\forall j \in My\_Sons({\cal A})\cap N(i) :$ Send $Dissolution({\cal A})$ to $j$
		\STATE $nbFeedbackDiss_i \rcp 0$
	\ELSE
		\STATE Send $FeedbackDiv({\cal A},w_1,w_2)$ to $My\_Father({\cal A})$
	\ENDIF
\ENDIF
\end{algorithmic}
\caption{On the reception of $Division({\cal A},w_1,w_2)$ from $k$
\label{algo-divM}}
}
\end{algorithm}

\begin{algorithm}[H]
{\small
\begin{algorithmic}[1]
	\STATE $nbFeedbackDiv_i ++$\\
	\texttt{// if $i$ receives the last feedback it was waiting for}
	\IF{$(i\in w_1 \land nbFeedbackDiv_i = |My\_Sons(w_1)\cap N(i)|) \lor (i\in w_2 \land nbFeedbackDiv_i = |My\_Sons(w_2)\cap N(i)|)$}
		\IF{$i=w_1[1]$}
			\STATE \texttt{// if $i$ is the root of the first tree}
			\STATE Random choice of $j \in N_i$, send $Token(i, w_1)$ to $j$
			\STATE $father_i\rcp j$
		\ELSIF{$i = w_2[1]$}
			\STATE \texttt{// if $i$ is the root of the second tree}
			\STATE Random choice of $j \in N_i$, send $Token(i, w_2)$ to $j$
			\STATE $father_i\rcp j$
		\ELSE
			\STATE send $FeedbackDiv({\cal A}, w_1, w_2)$ to $My\_Father({\cal A})$
		\ENDIF
	\ENDIF
\end{algorithmic}
\caption{On the reception of $FeedbackDiv({\cal A}, w_1, w_2)$
\label{algo-feedbackDivM}}
}
\end{algorithm}

First we prove that any node has information about its father in the tree borne by the token.

\begin{lemma}
If $i$ does not own the token, $(i, father_i)$ is an edge of the tree of a token of color $col_i$. 
\end{lemma}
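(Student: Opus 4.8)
The plan is to prove this as an invariant maintained by every step of the adaptive algorithm, establishing it by induction on the execution. First I would fix the reading of the statement. It is only non-vacuous when $col_i\notin\{null,-1\}$ and a token of color $col_i$ actually exists (by Lemma~\ref{unique} there is then exactly one such token), and I read ``$i$ owns the token'' as ``$i$ is the most recent node to have prepended its identifier'', i.e. $i=w_T[1]$, the root of ${\cal A}=Build\_Tree(w_T)$. This reading is the conceptual key: while the token message is in transit from a node to the neighbour it was forwarded to, the root of the borne tree is still the sender, so the sender keeps ownership until the destination actually processes the token and prepends its own identifier; only at that instant does the sender cease to be the root and simultaneously become a genuine child of the destination. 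With this convention the invariant I maintain is: \emph{for every node $i$ with $col_i\notin\{null,-1\}$ that is not the root of the token of color $col_i$, the value $father_i$ is exactly the parent of $i$ in $Build\_Tree(w_T)$, and in particular $(i,father_i)$ is an edge of that tree.}

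The base case is the initial configuration, where every node is free and no token exists, so the invariant holds vacuously. For the inductive step I would examine each rule that can change $father_i$, the word of a token, or which node is the root. The decisive observation, used repeatedly, is that $father_i$ is assigned the value $j$ exactly at the moments when $i$ forwards a token to $j$ --- namely on creation (Algorithm~\ref{algo-createM}), when $i$ joins and forwards in Case~1 of Algorithm~\ref{algo-tokenM}, and when a new root ships its freshly created token in Algorithm~\ref{algo-feedbackDivM}. In each of these cases $i$ has just prepended its own identifier, so $i=w_T[1]$ is the root and owns the token: the invariant does not constrain $i$. The complementary half of the argument is that when the destination $j$ later prepends its identifier, the former root $i$ slides to position $w_T[2]$ of the cleaned word (it is not removed by $Clean\_Word$ precisely because $i\in N_j$), hence becomes a child of $j$; since $father_i$ had been set to $j$ when $i$ forwarded, $(i,father_i)$ is then the corresponding tree edge, exactly as required. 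Send-backs by locked or blue nodes (Algorithm~\ref{algo-tokenM}) leave the word, hence the tree and the root, unchanged and preserve the invariant trivially; a blue node is in any case governed by the token of its own colour, not by the red token passing through it.

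The two delicate rules are division and the mobility rules. In the division (Algorithms~\ref{algo-divM} and \ref{algo-feedbackDivM}) the variable $father_i$ is \emph{not} touched during propagation, yet every node changes colour and the old tree ${\cal A}$ is replaced by the two subtrees $w_1$ and $w_2$. Here I would use that, by construction of $Divide$, $w_1$ and $w_2$ partition ${\cal A}$ into the subtree rooted at $w_1[1]=i$ (the launching node) and the subtree rooted at $r2=w_2[1]$, so that the parent of every node is preserved inside the subtree containing it, the \emph{only} exceptions being the two new roots $w_1[1]$ and $w_2[1]$, which lose their parent. Since these two roots are precisely the nodes that create and own the two new tokens in Algorithm~\ref{algo-feedbackDivM}, the invariant never has to hold for them, while every other node keeps a $father_i$ that still names its parent in its new subtree; and until the feedback actually spawns the tokens there is no token of the new colours at all (the old one having been destroyed when the division was launched), so the claim is vacuous in the meantime. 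For the mobility rules, whenever a node loses contact with its father it sets $col_i\leftarrow null$ and $father_i\leftarrow null$ and launches a $Delete$ wave, so that every node whose father pointer could have gone stale is freed and falls under the vacuous case; the token-repair step at the head of Algorithm~\ref{algo-tokenM} then prunes from the word exactly the subtree that the $Delete$ wave has already freed, removing those identifiers without disturbing the parent of any still-coloured node.

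I expect the main obstacle to be the division case combined with the ownership convention. One must check that the two nodes whose parent link is severed by the split are exactly the two future token-owners, so that the stale-looking (but never consulted) $father$ values of $w_1[1]$ and $w_2[1]$ are harmless, and one must be careful to assert the invariant only once the corresponding token exists. Pinning down the ownership-as-root reading cleanly --- so that a forwarded token in transit is still ``owned'' by its sender --- is what makes every forwarding step fall on the vacuous side, and it is the linchpin of the whole induction.
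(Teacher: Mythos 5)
Your overall skeleton --- an induction over execution steps, the ownership-as-root convention that keeps a token in transit ``owned'' by its sender, and the observation that $father_i$ is assigned exactly at the forwarding sites so that the newly-set value only becomes constrained when the receiver prepends its identifier --- matches the substance of the paper's (more direct) argument, and your careful treatment of the division case and of the in-transit state is actually more explicit than the paper's. However, there is a genuine gap at precisely the point where the paper spends its decisive sentences: you never show that the invariant survives the $Clean\_Word$ reductions performed at \emph{every subsequent} visit of the token. You verify only the single cleaning done by the immediate successor $j$ (where $i$ sits at position $2$ and survives because $i\in N_j$). But the tree is recomputed by $Build\_Tree$ at each visit, and the father of $i$ is whatever identifier \emph{currently} precedes the first occurrence of $i$ in the word. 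Since $Clean\_Word$ (algorithm \ref{algo-cleanWord}) deletes elements following occurrences of the executing node $k$, a fragment ${<}\ldots,k,j,i,\ldots{>}$ could \emph{a priori} lose the $j$ (the inner loop deletes $w[z{+}1]$ when $j\notin visited$ and $j\notin N_k$) while keeping $i$ (if $i\in N_k$), after which the rebuilt tree would give $i$ the father $k$ even though $father_i=j$, falsifying your invariant at an ordinary forwarding step for a node arbitrarily deep in the word.

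The paper closes exactly this hole with two persistence properties of the word reduction: no reduction ever deletes the first occurrence of a node, and the identifier written immediately before $i$'s first occurrence (by the node that received the token from $i$ at $i$'s last visit) ``has never been removed, since $i$'s id would have been removed too in this case'' --- facts about $Clean\_Word$ carried over from the earlier work it cites. These are the real content of the lemma; the catalogue of assignment sites of $father_i$, which you do carefully, is the easy half. To repair your induction you would need to add, as a maintained property of the word itself (not merely of the $father$ variables), that the two-element pattern consisting of $i$'s first occurrence and its predecessor is invariant under $Clean\_Word$, and prove or cite that reduction lemma. Tellingly, your ``main obstacle'' paragraph points at division and the ownership convention instead, which suggests the cleaning-persistence issue --- the step on which the whole statement rests --- was not on your radar.
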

\begin{proof}
$father_i$ is the last node to which $i$ has sent the token of color $col_i$ (algorithm 11, line 7; algorithm 12, line 8 and 27; algorithm 18, lines 6 and 10; algorithm 12, line 38 is when $i$ sends back a token that is not of its color; algorithm 13 and 14 show that if $i$ has no color anymore, it has no father anymore either).

The father of $i$ in the tree computed from a word $w$ is the node $j$ such that the first occurrence of $i$ in $w$ is preceded by $j$. Now, the first occurrence of $i$ in $w$ was written in $w$ the last time the token visited $i$. Indeed, no reduction of $w$ can delete the first occurrence of a node. Then, $i$ sent the token to a node, that added its id at the beginning of the token, \emph{ie} just before $i$'s id. This has never been removed, since $i$'s id would have been removed too in this case. Thus, the node to which $i$ sent the token is $j$. The father of $i$ in the tree of the token is $father_i$.
\end{proof}

We now focus on a link $(i, j)$ disappearing, with $father_i=j$, in a cluster of color $c$.

\begin{lemma}
When a link $(i, j)$ disappears, with $father_i=j$, eventually, all descendants of $i$ are eventually set free.
\end{lemma}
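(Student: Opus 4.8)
The plan is to exhibit, and then analyze, the \emph{Delete} wave initiated by $i$: I claim it propagates down the subtree rooted at $i$ along father-to-son edges and frees every node it reaches. First I would treat the disconnection itself. Since $father_i=j$, the disconnection rule fires at $i$, so $i$ executes $col_i\rcp null$, $father_i\rcp null$ and sends a $Delete$ message to all of its current neighbours; this frees $i$ and launches the wave, providing the base case.

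Next I would argue by induction on the number of father-links separating a node from $i$ in the token's tree. The essential ingredient is the previous lemma: every node $d$ that does not hold the token knows, through $father_d$, the unique tree edge joining it to its parent. The $Delete$-reception rule is guarded by the test that the sender equals $father_d$, so $d$ reacts to a $Delete$ only when it arrives from its actual father, and upon reacting it both sets $col_d\rcp null$ and relays $Delete$ to all its neighbours. Assuming every descendant of $i$ at father-distance $\le\ell$ is eventually freed and relays the wave, consider a descendant $d$ at distance $\ell+1$: its father $d'$ lies at distance $\ell$, so by the induction hypothesis $d'$ is eventually freed and sends $Delete$ to all its neighbours. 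Because the edge $(d,d')$ of the tree is a genuine communication link (again by the previous lemma, in a correct configuration) and messages are eventually delivered, $d$ eventually receives this $Delete$; since $father_d=d'$ the guard passes and $d$ is freed and relays the wave. The tree being finite, after finitely many levels all descendants of $i$ are free.

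The hard part will be the concurrency between this wave and the circulating token, which can overwrite a node's father while the wave is still in flight. If, before the $Delete$ from $d'$ reaches it, node $d$ receives the token and forwards it to a neighbour $e$, then the token rule resets $father_d\rcp e$ and $d$ will subsequently ignore the $Delete$ coming from $d'$. I would resolve this by working with the \emph{current} subtree rooted at $i$ rather than the static one: forwarding the token to $e$ makes $e$ the most recent, root-ward node, so $d$ is no longer a descendant of $i$ in the current tree and need not be freed by this wave. Concretely, I would carry through the execution the invariant sketched in the text preceding the algorithms, namely that the set of free nodes still appearing in the token always forms a subtree whose complement is also a subtree; a freed node re-visited by the token is simply re-recruited, so at every instant the freed region coincides exactly with the current subtree hanging below $i$. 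Proving that this invariant is maintained by every rule of the adaptive algorithm, rather than the clean static induction, is where the real work of the lemma lies.
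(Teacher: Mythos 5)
Your core argument --- the $Delete$ wave launched by $i$, propagated along father links and formalized by induction on father-distance --- is the same as the paper's, and your base case and guard analysis (reaction only when the sender equals $father_d$) are accurate. But there is a genuine gap in the inductive step: you rely on every tree edge $(d,d')$ below $i$ being ``a genuine communication link,'' which holds only if $(i,j)$ is the \emph{sole} disconnection. The lemma cannot be restricted to that case, because the paper explicitly models a node disconnection as the simultaneous disconnection of \emph{all} its adjacent links, so several father--son edges of the subtree may vanish at once. The paper's proof closes exactly this case in one sentence: if a descendant of $i$ is unreachable through the spanning tree, then some father--son link strictly below $i$ has itself disappeared, so the child endpoint of that link detects the loss of its father (algorithm 13) and launches its own $Delete$ wave; the cascading waves jointly free every descendant. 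Your induction, as written, simply stops at the first missing edge; to repair it you would argue that every maximal tree-connected piece of the subtree is headed by a node that detects a disconnection from its father and initiates a wave on that piece.

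Conversely, your third paragraph raises a concurrency issue --- the token visiting a descendant $d$ and overwriting $father_d$ before the $Delete$ arrives --- that the paper's proof of this lemma does not address at all; the paper handles it only in the narrative preceding the algorithms, by reading ``descendants of $i$'' as \emph{current} descendants and asserting, without proof, the very subtree/complement invariant you cite. So you correctly located where the statement is fragile, but you also leave that invariant unproven (you say its verification ``is where the real work of the lemma lies''), so on this point your attempt is a sketch with the same status as the paper's own treatment. In summary: add the multiple-disconnection case to match the paper's coverage; your concurrency discussion is a fair, though incomplete, refinement beyond what the paper's proof contains.
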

\begin{proof}
\begin{figure}[h]
\begin{center}\includegraphics[width=.5\linewidth]{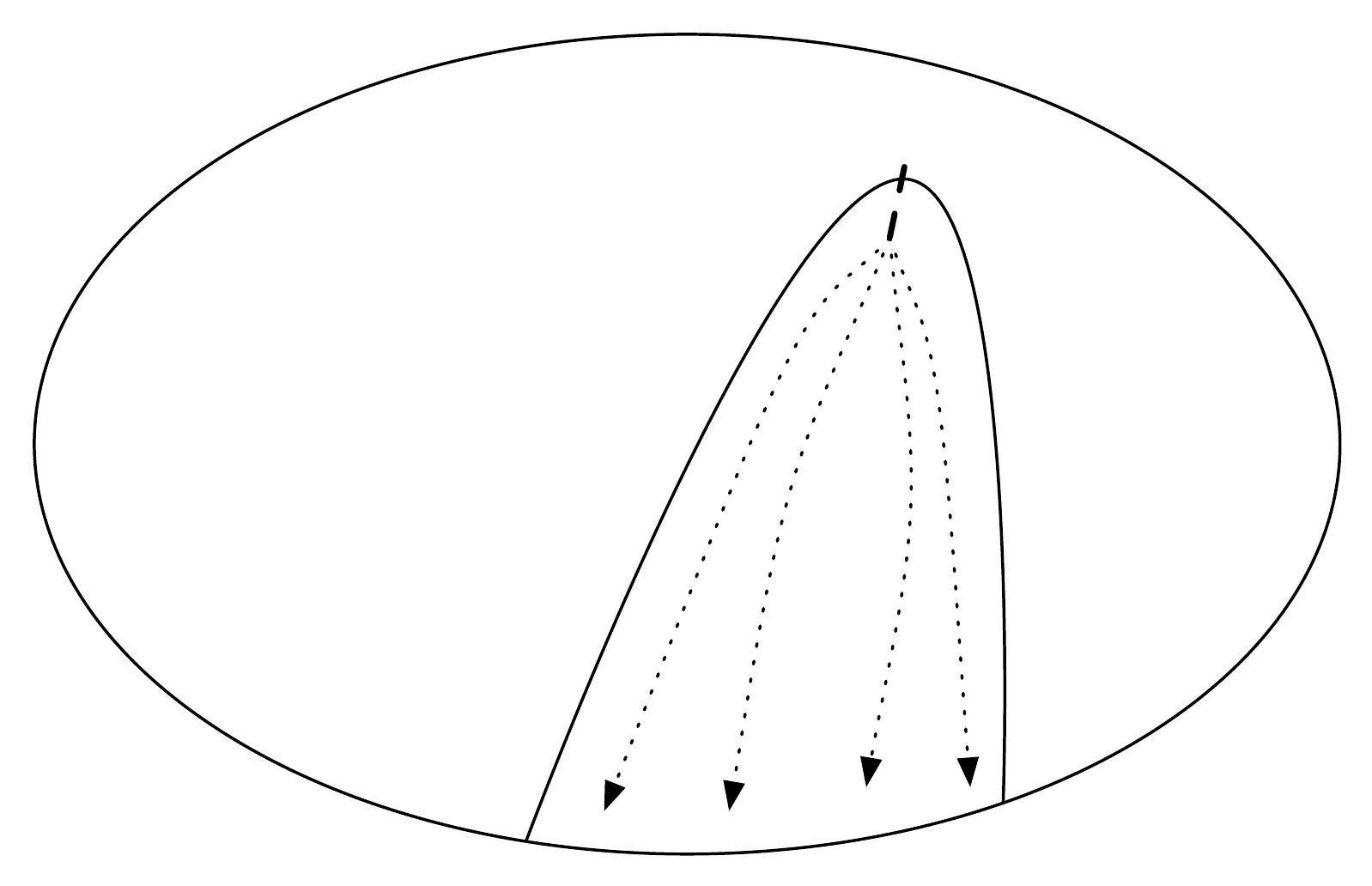}\end{center}
\caption{Descendants of $i$ are set free}
\end{figure}
When $i$ detects that it is no longer connected to its father, it sends a $Delete$ message to all its neighbors (algorithm 13). When a node receives a $Delete$ message, if it comes from its father, it sets itself free, and forwards it to all its neighbors (algorithm 14). Thus, all descendants of $i$ that can be reached through the spanning tree are set free. Now, if a descendant of $i$ cannot be reached through the spanning tree, a link in the subtree rooted in $i$ has also disappeared. Thus, a $Delete$ wave has triggered on a subtree of the spanning tree, lower than $i$, that has reached this node and set it free.
\end{proof}

Then, three case can happen:\begin{itemize}
\item a dissolution of the cluster occurs before the token of color $c$ hits $j$;
\item a division of the cluster occurs before the token of color $c$ hits $j$;
\item the token of color $c$ reaches $j$ before a dissolution or a division occurs.
\end{itemize}

Basically, when the token of color $c$ hits $j$, $j$ corrects the token to make it in accordance with the disconnection of $(i, j)$. A dissolution also leads to a correct configuration, and a division transfers the problem to the new cluster in which the disconnected subtree is included. 

Note that a version number in the token ensures that no two tokens (and thus, no two clusters) can have the same color.




\begin{lemma}
If the token hits $j$, the token is corrected.
\end{lemma}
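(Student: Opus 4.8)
The plan is to trace the execution of algorithm~\ref{algo-tokenM} at $j$ and verify, clause by clause, that the definition of a correct configuration is restored. First I would pin down the tree seen by $j$ at the instant the token arrives. By the lemma stating that $(i,father_i)$ is a tree edge, the last time $i$ held the token the pair $(i,j)$ was written into the word so that the first occurrence of $i$ in $w_T$ is immediately preceded by $j$. Since $Clean\_Word$ never erases the first occurrence of a node (the fact already used in that lemma), this occurrence persists, so the tree $\mathcal A = Build\_Tree(w_T)$ computed on line~1 of algorithm~\ref{algo-tokenM} still contains the edge $(i,j)$ with $i$ a son of $j$. Because the link $(i,j)$ has vanished we have $i\notin N(j)$, hence the guard on line~2 fires and $j$ deletes the subtree of $\mathcal A$ rooted in $i$ and rebuilds the word with $Tree\_To\_Word$.

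Next I would identify the removed subtree with exactly the nodes that make the configuration incorrect. By the previous lemma, $i$ and all its descendants have been set free by the $Delete$ wave, and these are precisely the nodes of the subtree rooted in $i$. I would then appeal to the invariant established in the discussion preceding this lemma — that the set of nodes which are free yet still appear in the token always forms a subtree of the token's tree whose complement is also a subtree, the complement being the part that contains the (non-free) root. This gives three things at once: no free node lies outside the subtree rooted in $i$, so deleting that subtree removes \emph{every} free node from $w_T$; the retained part is connected and still contains the root; and, since every link that disappeared lies inside the removed subtree (the broken link $(i,j)$, together with any further broken links inside the subtree rooted in $i$ noted in the previous lemma), every remaining tree edge corresponds to a genuine link of $E$.

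I would then close the argument by checking the remaining clauses of correctness. The correction neither creates nor destroys a token, so there is still a unique token of color $c$ in $N(V_c)$; the $Delete$ wave has terminated by the previous lemma, so no $Delete$ message circulates; and each freed node, having set $col\rcp null$ and $father\rcp null$, no longer occurs in any word. After the deletion step $j$ resumes the normal token handling (Case~1, since $col_j=col_T=c$), which operates on a tree that is now consistent with the topology. Hence the configuration is correct.

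The main obstacle I anticipate is the bookkeeping over the interval between the disconnection and the arrival of the token at $j$: in this interval the random walk may re-enter the detached region through non-tree links and re-recruit some freed nodes, and $Clean\_Word$ may rewrite $w_T$ at every move. I would meet this by proving that the ``free-but-present forms a subtree, and so does its complement'' invariant is preserved by each such step — a re-recruited node leaves the free subtree and is reattached near the token's current position, while $Clean\_Word$ preserves the first occurrence of every surviving node — so that at the moment $j$ acts the offending edge is still an edge $(s,j)$ with $s\notin N(j)$ and the subtree hung below it is exactly the residual free set. This reduces the whole statement to the single clean deletion step analyzed above.
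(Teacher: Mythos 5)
Your core step coincides with the paper's entire proof: the paper disposes of this lemma in one line, observing that when the token hits $j$, node $j$ executes lines 4--5 of algorithm~\ref{algo-tokenM} (the guard fires because, by the preceding lemma, the first occurrence of $i$ in $w_T$ is still preceded by an occurrence of $j$, so the rebuilt tree contains the edge between $j$ and $i$ while $i\notin N(j)$), and the stale subtree is excised. Up to that point your argument is right, and considerably more detailed than the paper's.

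The gap is in the strengthening you then attempt --- \emph{``deleting that subtree removes every free node from $w_T$, hence the configuration is correct''} --- and specifically in how you patch the re-recruitment scenario in your last paragraph. When the walk re-recruits a freed descendant $s$ of $i$ before reaching $j$, $s$ does \emph{not} simply ``leave the free subtree'' and get cleanly reattached near the token's position: its \emph{old} occurrence deeper in the word survives ($Clean\_Word$ only deletes entries immediately following an occurrence of the executing node, and by then $s$ already belongs to $visited$ because of its new front occurrence), and $Build\_Tree$ attaches sons at \emph{every} occurrence of a node, so the still-free children of $s$ --- which $Clean\_Word$ keeps, since they remain neighbors of $s$ --- are dragged along and now hang below $s$ in the live part of the tree, \emph{outside} the subtree rooted in $i$. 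Hence the free-but-present set is in general only a downward-closed union of subtrees (its complement is indeed a subtree containing the root, but the free part need not be a single subtree, and it is not confined below the offending edge), and the one deletion performed at $j$ does not purge all free nodes from $w_T$: after $j$'s correction, free nodes may still occur in the circulating word, violating precisely the correctness clause you claim to verify. This is why the paper's lemma asserts only that \emph{the token} is corrected (the phantom edge at $j$ and what hangs below it are removed) and defers full configuration correctness to the subsequent theorem, where the remaining stragglers are absorbed only eventually and \emph{whp} (each such node is free and adjacent to the cluster, so the walk eventually recruits it). Restricted to the lemma as stated, your deletion-step analysis is sound; the concluding ``hence the configuration is correct'' is false as stated.
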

\begin{proof}
On the token hitting $j$, $j$ executes algorithm 12, lines 4 and 5, which corrects the token. 

\begin{figure}[h]
\begin{center}\includegraphics[width=.5\linewidth]{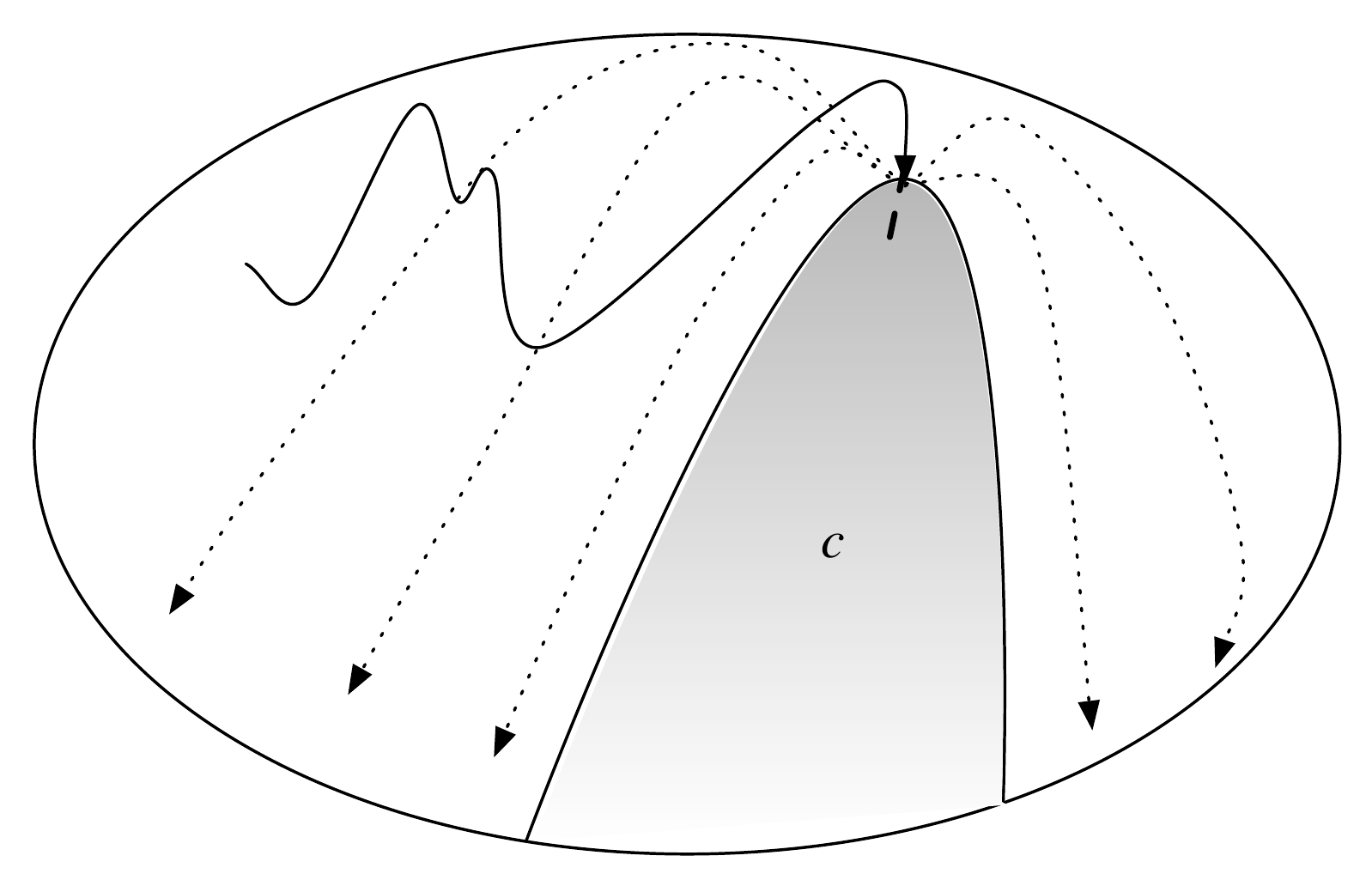}\end{center}
\caption{The dissolution wave when the token hits $j$}
\end{figure}
\end{proof}

\begin{lemma}
The dissolution process on a tree $\mathcal A$ makes all nodes in $\mathcal A$ free, even in case of a disconnection of a link $(i, j)$ in $\mathcal A$ with $father_i=j$, except for nodes that have been recruited to other clusters.
\end{lemma}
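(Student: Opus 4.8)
The plan is to analyze the dissolution process running on a tree $\mathcal A$ that has suffered a disconnection of link $(i,j)$ with $father_i=j$, and to show that every node of $\mathcal A$ ultimately becomes free (with the stated exception). First I would use the previous lemma establishing that, once the link $(i,j)$ disappears, the $Delete$ wave has already propagated down the subtree rooted in $i$, so that all descendants of $i$ (those reachable through the current father-son relationships) are set free. This partitions $\mathcal A$ into two pieces: the subtree $S_i$ rooted in $i$, whose nodes are already free before or independently of the dissolution, and the complementary part $\mathcal A\setminus S_i$, which remains connected and is still a genuine subtree containing the root.

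Next I would trace the $Dissolution$ PIF itself, now run with the guard $My\_Sons(\mathcal A)\cap N(i)$ in algorithms \ref{algo-dissM} and \ref{algo-feedbackDissM}. The key observation is that the dissolution wave only follows edges of $\mathcal A$ that still physically exist, i.e. sons that are actual neighbors. Consequently the propagation cannot cross the broken edge $(i,j)$: it descends from the root of $\mathcal A$ through $\mathcal A\setminus S_i$, and when it reaches $j$ it simply does not forward toward $i$ (since $i\notin N(j)$ any more), treating $j$ as effectively a leaf on that branch for the purpose of this branch of the wave. Thus every node of $\mathcal A\setminus S_i$ receives the $Dissolution$ message, sets $col_i\rcp-1$, and eventually receives or emits the matching $FeedbackDiss$, becoming free by the feedback lines. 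I would argue termination exactly as in the static dissolution lemma (Lemma \ref{algo-feedbackDiss} / the non-adaptive dissolution): the restriction to $N(i)$ only prunes the tree to its still-connected component, and a PIF on a finite tree terminates with every node in that component freed.

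Finally I would combine the two parts: nodes in $S_i$ are freed by the $Delete$ wave, nodes in $\mathcal A\setminus S_i$ are freed by the (restricted) $Dissolution$ PIF, so all nodes of $\mathcal A$ become free. The exception clause handles nodes that, while free, were hit by the circulating token of some other cluster (or by a token of color $c$ that was corrected at $j$) and recruited before the dissolution reached them; such a node has legitimately left the freed set and joined a connected cluster, which is consistent with correctness. I would note that, by Lemma \ref{unique} and the version-number mechanism, these recruitments do not reintroduce a spurious token of color $c$, so the dissolved cluster genuinely disappears.

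I expect the main obstacle to be the interleaving between the $Delete$ wave and the $Dissolution$ wave on the \emph{same} tree, together with asynchrony: I must rule out that a node in $\mathcal A\setminus S_i$ gets ``stuck'' waiting for a $FeedbackDiss$ from a son across the broken edge, and must ensure that a node already freed by $Delete$ but still appearing in the token does not deadlock the feedback count. The guard $|My\_Sons(\mathcal A)\cap N(i)|$ in the feedback condition is precisely what resolves this, and the careful point of the proof is to verify that this count is exactly the number of feedbacks the node will actually receive, so that no node waits forever and the feedback reaches the root, completing the dissolution.
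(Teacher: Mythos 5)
Your proposal is correct and takes essentially the same route as the paper's own (much terser) proof, which likewise argues that the $Dissolution$ wave, forwarded only to sons in $My\_Sons({\mathcal A})\cap N(i)$, hits and frees every node still reachable through the tree, while the nodes cut off by the loss of $(i,j)$ were already freed by the $Delete$ wave of the preceding lemma, except those meanwhile recruited by other clusters. Your extra bookkeeping on the feedback count and on why no node blocks waiting across the broken edge goes beyond what the paper states explicitly, but it matches the intent of the restricted guards in the adaptive dissolution algorithms.
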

\begin{proof}
\begin{figure}[h]
\begin{center}\includegraphics[width=.5\linewidth]{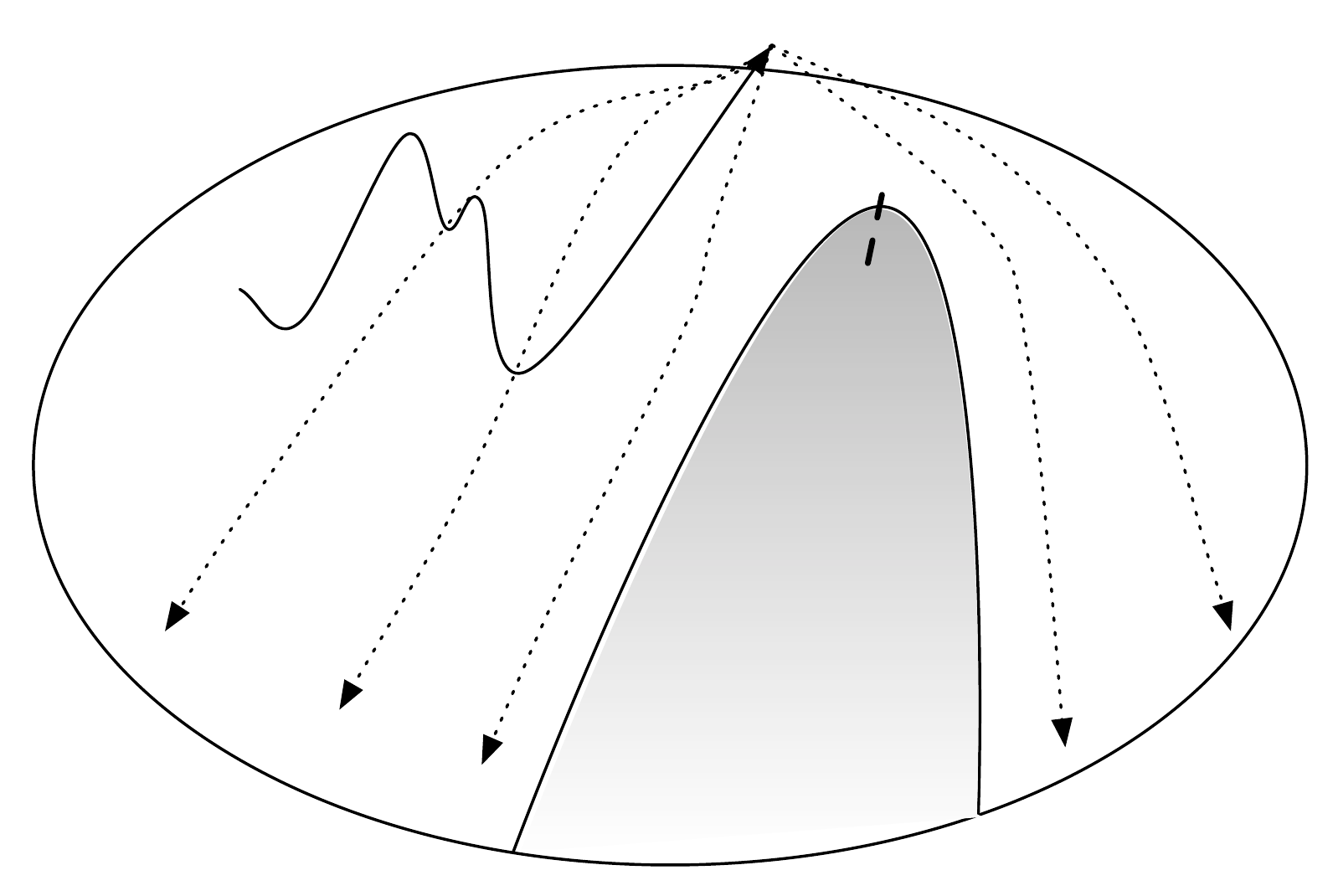}\end{center}
\caption{The dissolution wave}
\end{figure}
When receiving a $Dissolution$ message a node forwards it to all of its reachable sons, so that all reachable nodes in $\mathcal A$ are hit. Then, it leaves the cluster and gets free.

Nodes that are not reachable have received $Delete$ messages (see lemma above), and are free, unless they have been recruited by another cluster.
\end{proof}

\begin{lemma}
After a division process on a tree $\mathcal A$ along an edge $(a, b)$ with colors $c$ and $c'$, and with the disconnection of $(i, j)$, with $father_i=j$, nodes that have been recruited to another cluster keep their colors; nodes that are in the same connected component as $a$ take the color $c$; nodes that are in the same connected component as $b$ take the color $c'$; nodes that are disconnected from both $a$ and $b$ (due to the loss of $(i, j)$) are free, unless they have been recruited to another cluster.
\end{lemma}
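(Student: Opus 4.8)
The plan is to prove this by tracing the $Division$ PIF of algorithm~\ref{algo-divM} on the reachable part of $\mathcal A$ and combining it with two facts already established: that $father_i$ always identifies $i$'s current father in the token's tree, and that the loss of $(i,j)$ with $father_i=j$ sets every reachable descendant of $i$ free through the $Delete$ wave. The argument runs in four movements: propagation of the colors, treatment of the orphaned subtree, a case split on the position of $(i,j)$ relative to $(a,b)$, and the behavior of nodes captured by foreign tokens.

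First I would show, by induction on the depth in $\mathcal A$, that a $Division$ message reaches exactly those nodes of $\mathcal A$ still joined to the initiator (the root $w_1[1]$) by intact links. Each node relays $Division$ only to $My\_Sons(\mathcal A)\cap N(i)$, so the wave stops precisely at the broken edge, $j$ never relaying it to $i$. A reached node then tests whether it lies in $w_1$ or in $w_2$ and colors itself $c$ or $c'$ accordingly. Since $w_1$ and $w_2$ are the two sides of the cut along $(a,b)$, every reached node on the $a$-side obtains $c$ and every reached node on the $b$-side obtains $c'$; moreover, because the feedback counts only reachable sons ($|My\_Sons(w_1)\cap N(i)|$ and $|My\_Sons(w_2)\cap N(i)|$ in algorithm~\ref{algo-feedbackDivM}), the PIF terminates and a new token is emitted by each root whose whole subtree was reached. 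This gives the color claims for all nodes that remain connected to $a$ or to $b$.

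Second, the orphaned subtree is handled by the earlier lemma: since $(i,j)$ is gone and $j$ does not relay, the subtree rooted in $i$ receives no $Division$, and it has already been emptied by the $Delete$ wave. Hence these nodes are free, and they are exactly the nodes disconnected from both $a$ and $b$. I would then split on the position of $(i,j)$ relative to $(a,b)$. When $(i,j)$ lies strictly below the cut (inside the side meant to receive $c'$), both clusters form, the $c$-side being correct and the $c'$-side being the incorrect one, its word still listing the freed subtree. When $(i,j)=(a,b)$ the two waves cut the same edge and a single, correct cluster forms, the opposite side being the freed part. When $(i,j)$ lies above the cut, only the cluster containing the initiator survives and is incorrect (its word lists freed nodes), the remainder of the tree having been orphaned. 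In each case the set of freed nodes coincides with the set disconnected from both $a$ and $b$.

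Finally, nodes that have meanwhile been recruited by a token of another color are never revisited by our $Division$ PIF, which only travels on the reachable subtrees of $\mathcal A$; they therefore retain their new color, which settles the first clause and the ``unless recruited'' exceptions. The delicate point, and the one I expect to be the main obstacle, is the concurrency between the $Delete$ wave and the $Division$ PIF, together with the exact component-to-color bookkeeping in the ``equal'' and ``higher'' cases, where one of the two intended clusters never materializes: I must argue that the set reached by $Division$ and the set emptied by $Delete$ are complementary subtrees of $\mathcal A$ at all times, so that no node is both colored and freed, and that the cluster which does form is consistent with the nodes that physically remain connected to its root.
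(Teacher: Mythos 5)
Your proposal is correct and follows essentially the same route as the paper's own proof: the $Division$ wave covers exactly $\mathcal A$ minus the subtree rooted in $i$ (whose nodes were freed by the $Delete$ wave of the earlier lemma), followed by the identical three-way case split on the position of $(i,j)$ relative to $(a,b)$, concluding that at most one of the resulting clusters is incorrect. Your added detail --- the reachability induction, the observation that the $\cap N(i)$ guard lets the feedback terminate, and the note that recruited nodes lie in the orphaned subtree and so are never touched by the wave --- merely makes explicit what the paper's terser proof leaves implicit, and the ``delicate point'' you flag (complementarity of the freed and colored sets) is already settled by your own observation that $Division$ travels only on tree edges and is blocked at the broken edge $(j,i)$.
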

\begin{proof}
The division wave is propagated along $\mathcal A$ minus the subtree rooted in $i$. All nodes in this subtree have set themselves free.
\begin{figure}[h]
\begin{center}\includegraphics[width=.5\linewidth]{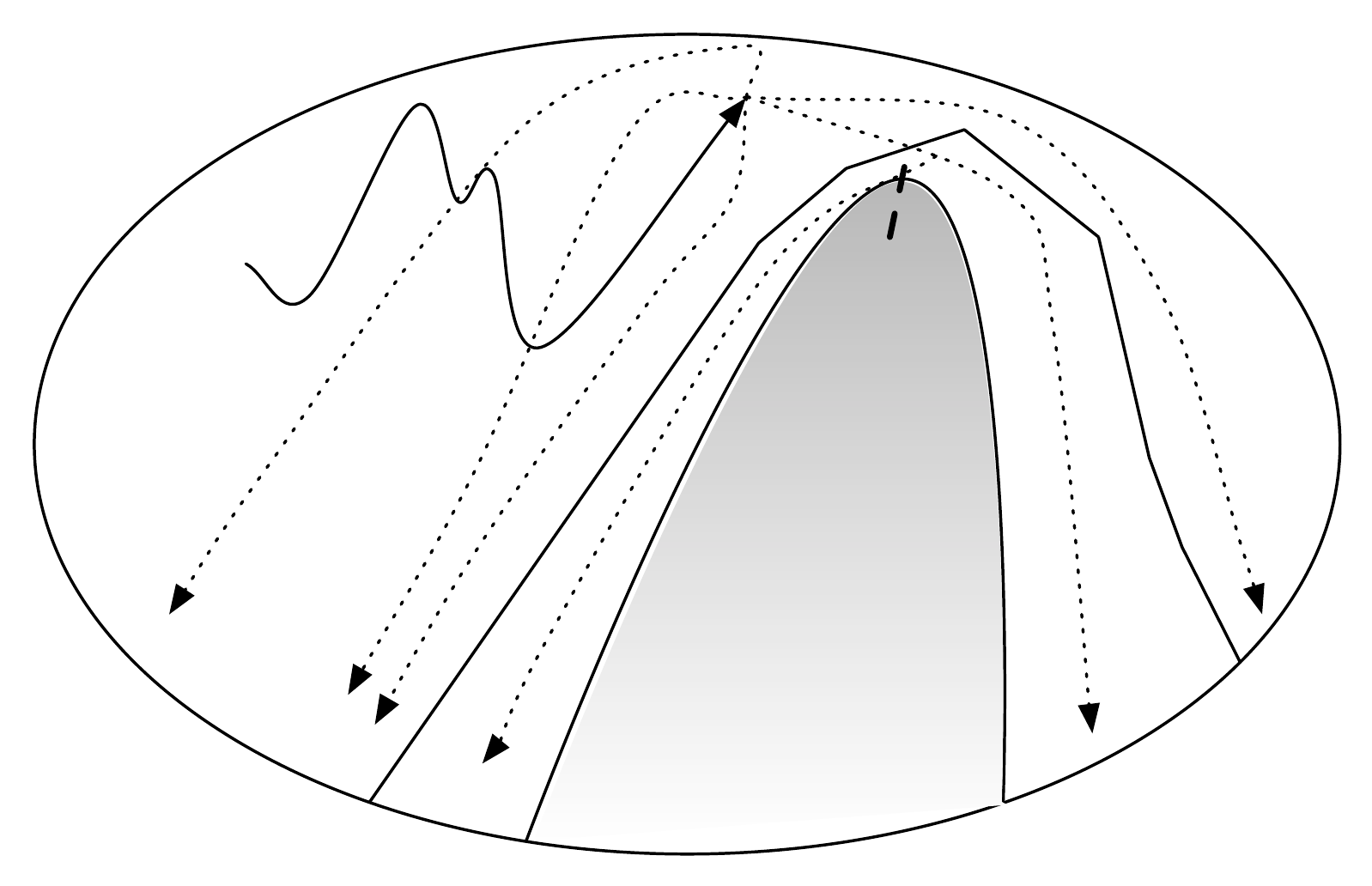}\end{center}
\caption{The division wave}
\end{figure}

Three cases occur:\begin{itemize}
\item $(i, j)$ is lower in $\mathcal A$ than $(a, b)$: the subtree rooted in $i$ is a subtree of the subtree rooted in $b$;
\item $(i, j)=(a, b)$: then all nodes in the subtree obtained by removing the subtree rooted in $b$ set their color to $c$; the others are either free or recruited by other clusters;
\item in other cases: all nodes in the subtree obtained by removing the subtree rooted in $i$ set their color to $c$; the others are either free or recruited by other clusters.
\end{itemize}

In all of these three cases, at most one cluster is not correct after the division. Other nodes have their expected colors, or are free.

\end{proof}

%
%
%
%
%
%
%
%
%
%
%
%
%
%
%
%
\begin{lemma}\label{uniqueM}
Eventually, a token of color $c$ exists if and only if $V_c\neq\emptyset$. In this case, this token is unique and is in $N({V_c})$.
\end{lemma}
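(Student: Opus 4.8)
The plan is to follow the structure of the proof of Lemma~\ref{unique}, of which this statement is the mobility-adaptive analogue, adapting each step to the modified code and incorporating the two genuinely new ingredients: the $version$ component of a color and the disconnection-handling machinery ($Delete$ waves, token correction, and token recreation). First I would enumerate exhaustively the points at which a token is created and the points at which a node changes or loses its color. In the adaptive algorithm a token is born only when a free node awakens (Algorithm~\ref{algo-createM}, where the color is set to $(i,version)$ with $version$ then incremented), when the root of a new subtree completes a division feedback (Algorithm~\ref{algo-feedbackDivM}), or when a node recreates the token of its own cluster after a between-cluster link disconnection has swallowed it (the acknowledgement mechanism described in the text, again with a fresh $version$). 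Symmetrically, a node becomes free only through a dissolution (Algorithms~\ref{algo-dissM} and~\ref{algo-feedbackDissM}) or through a $Delete$ wave, and it changes color (abandoning $c$ for a new color) only through a division (Algorithms~\ref{algo-divM} and~\ref{algo-feedbackDivM}); in every such case the token of color $c$ is simultaneously removed or replaced.

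Second, I would establish the uniqueness half. The key observation is that every creation stamps the color with the identifier of the originating node together with a strictly increasing $version$, so two distinct creation events --- even two successive creations at the same node $c$ after its token has been removed --- necessarily carry distinct colors. Hence at most one token ever bears a given color $c$ at any time, exactly as in Lemma~\ref{unique}, and the disconnection case cannot manufacture a duplicate: the acknowledgement protocol guarantees that of the two endpoints of the lost link exactly one keeps/recreates a token, and the recreated one carries a new version and is therefore a different color. I would then argue, as in the static case, that the node $c$ named by a color is in $V_c$ whenever a color-$c$ token exists, and conversely that $V_c\neq\emptyset$ forces $c\in V_c$ and the existence of the token, so the biconditional holds once the transients have settled.

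Third, for the location claim I would show by induction on the token's moves that a color-$c$ token always resides in $N(V_c)$. It is created on a node of $V_c$ (the awakening node, a subtree root, or a recreating node, all of which adopt color $c$), and thereafter it is only forwarded to a chosen neighbour; when it reaches a node outside $V_c$, that node either recruits itself into $V_c$, sends the token back to its sender in $V_c$, or triggers a dissolution that removes the token --- in every case the token stays within one hop of $V_c$. The correction step at the head of Algorithm~\ref{algo-tokenM} only prunes an unreachable subtree and never transports the token out of $N(V_c)$.

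Finally, I would bind these pieces to the preceding lemmas of this section, which already show that each kind of disconnection is resolved into a correct configuration (the $father$ information is accurate; $Delete$ waves free exactly the unreachable subtree; dissolution and division are carried out correctly on the reachable subtree). The main obstacle I anticipate is precisely this interaction between disconnections and the invariant: I must rule out both a spurious second token of an existing color and the silent annihilation of the unique token without $V_c$ also disappearing or being corrected. The version counter neutralizes the first danger, and the acknowledgement-plus-recreation mechanism, together with the $Delete$/correction lemmas above, neutralizes the second; checking that these mechanisms compose correctly in each of the three division-versus-disconnection cases is the delicate bookkeeping the proof must carry out in order to legitimately invoke the word \emph{eventually}.
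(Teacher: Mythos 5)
Your overall architecture does match the paper's proof of Lemma~\ref{uniqueM}: forward direction by asking how the token reached its current node (created there, or received from a neighbor and then sent back, dissolved, or orphaned by a disconnection), uniqueness via the version-stamped colors, and the reverse direction via an enumeration of token-destroying events. But one of your concrete steps fails. In your biconditional you argue, \emph{``as in the static case, that the node $c$ named by a color is in $V_c$ whenever a color-$c$ token exists, and conversely that $V_c\neq\emptyset$ forces $c\in V_c$.''} This transfer from Lemma~\ref{unique} is exactly what mobility breaks: a $Delete$ wave can set the creator node free (it may well be a descendant of the disconnection point) while the rest of the cluster keeps its color, so $V_c\neq\emptyset$ with $c\notin V_c$ --- indeed this is precisely why the version number was introduced, since otherwise the freed creator could mint a second token whose color collides with the surviving cluster's. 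The paper's proof deliberately drops the ``$c\in V_c$'' invariant in the adaptive setting: for the reverse direction it argues only that any $i\in V_c$ once received a token of color $c$, and that such a token disappears only when a dissolution or a division is launched, in which cases the preceding lemmas (Lemmas 12 and 13 in the paper's numbering) show that $V_c$ eventually empties, so the two sides of the equivalence eventually agree. You need that event-enumeration argument, not the creator-membership claim.

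A second deviation is your handling of the inter-cluster disconnection: you have $i$ recreate the token with \emph{``a new version and therefore a different color.''} If the recreated token carried a fresh color, the nodes still bearing the old color $c$ would form a nonempty $V_c$ with no token of color $c$, falsifying the very statement you are proving. The paper's proof of Lemma~\ref{uniqueM} actually resolves this case with no recreation at all: since $father_i=j$, the loss of $(i,j)$ makes $i$ (the current root of the token's tree) launch the $Delete$ wave that frees its entire cluster, while $j$, unable to send the token back, simply drops it --- token and cluster vanish together, preserving the equivalence. (The paper's earlier prose about an acknowledgement with recreation of \emph{``the same content''} is in tension with this, but ``same content'' keeps the old color, which is consistent with the lemma; your fresh-version variant is not.) On the positive side, your explicit listing of the division-feedback tokens of Algorithm~\ref{algo-feedbackDivM} among the creation events is more careful than the paper, whose uniqueness paragraph mentions only Algorithm~\ref{algo-createM}.
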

\begin{proof}
Suppose a token of color $c$ exists, and is on a node $i$. Then $i$ either has created it, or has received it from a node $j$. If it has created it, $col_i=c$, and $i\in V_c$. Thus, $V_c\neq\emptyset$. If it has received it from $j$, and $i$ is not of the color $c$, then it sends it back to $j$ (algorithm 12, line 33) or it destroys both the token and $V_c$ (algorithm 12 line 31). If the link $(i, j)$ disconnects before $j$ has had the time to send it back to $i$, then $i$ deletes its cluster (since $father_i=j$: algorithm 13 and lemma 10) and $j$ deletes the token (algorithm 12, lines 25 and 33: $j$ is unable to send back the token, and does nothing, so that the token disappears). Thus, $i\in V_C$, and $j\in N(i)$. In any case, if a token of color $c$ exists, then $V_c\neq \emptyset$, and the token is in $N(V_c)$. 

The only token creation is on a free node executing algorithm \ref{algo-create}. The color of this token is constituted of the id of the free node that creates it, and of a unique version number. Thus, no two tokens can have the same color.

Now, suppose $V_c\neq\emptyset$. Let $i\in V_c$. Since $i$ is of color $c$, it has received a token of color $c$. Now, the only time a token of color $c$ disappears is when a dissolution or a division process is launched: in algorithm 12, each time a node receives a $Token$ message, it sends a $Token$ message or triggers a division or a dissolution. When a division is launched, all sites in $V_c$ change their color (lemma 13), so that eventually $V_c=\emptyset$. When a dissolution is triggered, all sites in $V_c$ are set free (lemma 12), and eventually $V_c=\emptyset$.
\end{proof}

\begin{theorem}
From any configuration obtained from a correct configuration by adding and removing some nodes and links, the algorithm converges to a correct configuration \emph{whp} (and then, to a legitimate configuration).
\end{theorem}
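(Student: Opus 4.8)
The plan is to show that any configuration $\gamma$ obtained from a correct configuration by a finite sequence of topological changes eventually returns to a correct configuration, after which the convergence to a legitimate configuration follows directly from the first section's Theorem~1. I would organize the argument around the single source of incorrectness identified earlier: by the case analysis preceding the adaptive algorithm, the only topological event that breaks correctness is the disconnection of a link $(i,j)$ with $father_i=j$ inside some cluster of color $c$. Link connections, node connections, free-endpoint disconnections, and same-cluster non-tree disconnections all leave the configuration correct, so it suffices to handle finitely many ``broken father-link'' events and argue they are each repaired.

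First I would fix a single such event on a cluster $V_c$ and invoke Lemma~10 to conclude that, \emph{whp}, the entire subtree rooted in $i$ is set free by the $Delete$ wave. At this point the token of color $c$ still ``believes'' those freed nodes belong to $V_c$, so the configuration is incorrect, but in a controlled way: the set of token-nodes that are actually free forms a subtree, and so does its complement. I would then split on the three events that can next affect $V_c$ — the token hits $j$, a dissolution of $V_c$ is triggered, or a division of $V_c$ is triggered — and appeal to Lemmas~11, 12 and 13 respectively. Lemma~11 gives that $j$'s correction restores consistency with the topology; Lemma~12 gives that a dissolution frees all reachable nodes (the unreachable ones having already been freed by $Delete$); Lemma~13 gives that after a division at most one resulting cluster is incorrect, the incorrectness having been pushed into a smaller cluster. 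The key point, which I would state explicitly, is that each of these three events is reached \emph{whp}: the token of color $c$ follows a random walk on $N(V_c)$ and so hits $j$ \emph{whp} by the hitting property of random walks, while dissolution and division are themselves reached \emph{whp} by the reasoning of the static lemmas, and the version-numbered colors (Algorithm~\ref{algo-createM}) guarantee that the uniqueness property of Lemma~\ref{uniqueM} survives, so no spurious duplicate cluster can persist.

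The final step is to package these into a global termination argument. Since the perturbation consists of finitely many added and removed nodes and links, only finitely many broken father-links are created, and each triggers at most finitely many $Delete$, $Dissolution$, $Division$ and $FeedbackDiv$ messages; in the division case the incorrectness strictly migrates to a smaller cluster, which bounds the recursion. Hence, \emph{whp}, after finitely many repair steps no incorrect cluster remains and no $Delete$ message circulates, which is exactly a correct configuration. The concluding sentence then invokes the remark (``without any further topological modifications, \ldots the system eventually reaches a legitimate configuration'') together with Theorem~1 to deduce convergence to a legitimate configuration.

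**Main obstacle.**

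The hard part, I expect, is the division case (Lemma~13) combined with the global termination count: a disconnection interacts with a concurrently-running division PIF, and one must argue that the ``incorrect'' cluster handed off by the division does not breed \emph{new} broken father-links that re-cascade indefinitely. One has to verify that the incorrectness is always strictly pushed into a smaller cluster (so the recursion is well-founded) and that the freed subtree's complement stays a genuine subtree under all the concurrent waves, so that the invariant ``free-token-nodes form a subtree, complement a subtree'' is preserved across every message reception — this bookkeeping, rather than any single probabilistic estimate, is where the real care is needed.
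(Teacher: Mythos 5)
Your proposal is correct and takes essentially the same route as the paper's own proof: additions are harmless, node removals reduce to removals of all adjacent links, and each broken father--son link is repaired by Lemma~10's \emph{Delete} wave followed by the three-way case split (token hits $j$, dissolution, division) invoking Lemmas~11, 12 and 13, with the division case closed by induction. Your added bookkeeping --- the explicit \emph{whp} hitting argument, the version-number remark, and the strictly-smaller-cluster bound on the division recursion --- only makes explicit what the paper leaves as an informal induction (``the same applies\ldots{} by induction'').
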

\begin{proof}
The addition of a link or a node (that executes its initialization procedure) to a correct configuration leads to a correct configuration (see definition of a correct configuration).

If a node is removed, then we treat it as if all its adjacent links were removed.

So, consider the disconnection of a link $(i, j)$. If $i$ and $j$ have no father-son relation, nothing happens (algorithm 13). Assume $j$ is the father of $i$. Then $i$ sets itself free and sends a $Delete$ message to all its neighbors (algorithm 13). All nodes descending from $i$ in $\mathcal A$ are free (lemma  10). 

Then one of the three following events happens first:\begin{itemize}
\item the token hits $j$;
\item the cluster is dissolved;
\item the cluster is divided.
\end{itemize}

If the token hits $j$, lemma 11 shows that the cluster and its token are corrected. If the cluster is dissolved, then all nodes are free (lemma 12) , and this is correct. If the cluster is divided, the new cluster that inherited the edge $(i, j)$ is still incorrect, and the other cluster either does not exist, or is correct (lemma 13). In the latter case, the same applies. Then, by induction, either a token of $j$'s color eventually visits $j$, or $j$'s cluster is eventually dissolved \emph{whp}. Thus, \emph{whp}, the clusters containing $j$ is correct.

The same applies to all clusters, and when all clusters are correct, then the configuration is correct.

\end{proof}

\begin{theorem}
Starting from a legitimate configuration, and adding a topological modification, the algorithm converges to a correct configuration and the only clusters modified are at worst the cluster in which the modification took place, and the adjacent clusters.
\end{theorem}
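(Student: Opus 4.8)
The plan is to argue that a single topological modification can only break correctness in a bounded region of the network, and that the repair mechanisms (\emph{Delete} waves, dissolution, division) propagate only within that region. By Theorem~2, from the resulting configuration the algorithm converges back to a correct (hence eventually legitimate) configuration \emph{whp}; what remains to show is the \emph{locality} claim: the set of clusters whose nodes ever change state is confined to the cluster where the modification occurred and its neighbors.

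First I would enumerate the cases of a topological modification exactly as in the definition preceding Theorem~2, and dispose of the trivial ones immediately: a link connection, a node connection, a free-endpoint link disconnection, and an intra-cluster disconnection that does not break a father--son edge all leave the configuration correct, so \emph{no} cluster is modified and locality holds vacuously. The only substantive case is the disconnection of a link $(i,j)$ with $father_i=j$, inside a cluster $V_c$ (node disconnection reduces to the disconnection of all adjacent links, each handled separately). Here I would show that the only nodes that change state are (i) the descendants of $i$ in the token's tree, which are set free by the \emph{Delete} wave (lemma~10), all of which lie in $V_c$; and (ii) possibly nodes in clusters \emph{adjacent} to $V_c$, which may recruit the freed nodes or be dissolved when $V_c$'s token crosses their boundary.

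The key step is to track where the correction activity can reach. The \emph{Delete} wave travels strictly downward along the (former) subtree rooted in $i$, and every node it touches was already in $V_c$, so this wave stays inside $V_c$. The subsequent repair — whichever of the three events of Theorem~2 fires first (the token hits $j$, a dissolution, or a division) — is by construction a PIF confined to the tree of $V_c$'s token, so it too touches only nodes of $V_c$. The freed nodes can be recruited only by a token whose random walk reaches them, and such a token belongs to a cluster in $N(V_c)$, i.e.\ a cluster adjacent to $V_c$; symmetrically, $V_c$ can trigger a dissolution only on a cluster its own token meets, again a neighbor of $V_c$. I would invoke lemma~\ref{uniqueM} to guarantee that no token of a non-adjacent color is ever involved, and emphasize that, by the very construction of the messages, no \emph{Delete}, \emph{Dissolution}, \emph{Division} or \emph{Token} message of $V_c$ is ever sent to a node outside $N(V_c)$. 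Hence every node whose variables change lies in $V_c\cup N(V_c)$, so the modified clusters are $V_c$ and its adjacent clusters.

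The main obstacle I anticipate is the division case, which is genuinely subtle: after a division, one of the two resulting clusters may be \emph{incorrect} (its token ``believes'' certain nodes belong to it), and one must argue that the induction in Theorem~2 which re-corrects this cluster does not leak activity into ever-farther clusters. The danger is that repeated recruitment-and-dissolution of freed boundary nodes could cascade outward. To close this I would argue that each recruitment strictly confines the affected nodes to $N(V_c)$ (a node recruited from $V_c$'s boundary either stays in a neighbor of $V_c$ or is freed and re-recruited within the same neighborhood), and that the colors $c$ and $c'$ produced by the division are ``local'' to $V_c$, so the inductive correction of the surviving incorrect cluster never introduces a token whose reach extends beyond the neighbors of $V_c$. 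I expect this argument to require a careful invariant stating that throughout the repair, the set of non-correct clusters is always contained in $V_c$ together with its adjacent clusters, proved by induction on the repair steps.
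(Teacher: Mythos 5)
There is a genuine gap at exactly the point you flag as an ``anticipated obstacle.'' Your proposal defers the crux --- ruling out an outward cascade through repeated recruitment, dissolution and division --- to a yet-to-be-found invariant proved by induction on repair steps. The paper closes this in one step, with an argument your proposal never states: since the starting configuration is \emph{legitimate}, every cluster other than $V_c$ is stable and non-divisible, and by lemma~\ref{stablenode} a node in a stable cluster remains in a stable cluster, so it can never be set free. Hence the only nodes that are ever free during the whole repair are nodes of $V_c$ (the descendants of $i$ freed by the $Delete$ wave, plus possibly the rest of $V_c$ if the corrected cluster is unstable and gets dissolved --- note that stable neighbors can never be dissolved, since dissolution requires $Nb\_Identities(w_T)<m$; your phrase that ``$V_c$ can trigger a dissolution on a cluster its own token meets'' has the direction backwards: when $V_c$'s small token meets a stable neighbor, it is $V_c$ itself that is dissolved). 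Combined with the observation that a stable non-divisible cluster can only be modified by the algorithm if it has a free neighboring node, this immediately confines all activity --- including after any divisions --- to $V_c$ and the clusters adjacent to it: your feared cascade to second-degree neighbors is impossible because those clusters are stable, non-divisible, and have no free neighbor, so they are inert. Without this lemma (or an equivalent), your message-locality tracking does not suffice: recruitment of freed nodes genuinely modifies adjacent clusters and can make them divide, so the repair does \emph{not} stay inside PIFs on $V_c$'s tree, and the induction you sketch is precisely the missing content rather than a routine verification.

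A second, smaller error: you dispose of the node-connection case as ``locality holds vacuously'' because the configuration remains correct. Correct is not legitimate --- the new node is free, so the algorithm does act: the node is recruited by an adjacent cluster (or creates a singleton cluster whose token immediately hits a stable cluster and is dissolved), and at worst the recruiting cluster becomes divisible and divides. Locality still holds, but by the same ``no other cluster has a free neighbor'' argument as above, not vacuously. The paper treats this case explicitly for that reason.
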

\begin{proof}
A stable non-divisible cluster can be modified by the algorithm only if a neighboring node is free (algorithm 12). 

The starting configuration is a legitimate one, with the addition of a topological modification. If this topological modification is the addition of a link, nothing happens. If this is the addition of a node, then, this node will be recruited by a cluster (being surrounded by stable clusters, if it creates its own cluster, at the first step, its token reaches a stable cluster and its cluster is dissolved), and, at worst it will trigger a division. Other clusters are stable and non-divisible, and have no neighboring free node : they are not modified.

Now, consider a link disconnection. If this link was between two nodes without father-son relationship, then nothing happens. Thus, let consider the case when a link $(i, j)$ disappears, and $i$ is the son of $j$ in cluster $V_c$. Then, $i$ sets all its descendant free (lemma 11). Neighboring clusters may recruit them. Nodes in the cluster that are not descendant of $i$ remain in the cluster. At some point, the token reaches $j$ and is corrected. If the cluster is still stable, then it goes on. Otherwise, its node may be set free by its dissolution. Only nodes in $V_c$ may be set free. Indeed, all other nodes are in stable clusters, and remain in stable cluster according to lemma \ref{stablenode}. Since only nodes in $V_c$ may be set free, and all other clusters are stable and non-divisible, only clusters that have a neighbor in $V_c$ can be modified.
\end{proof}

\section{Conclusion}

The algorithm presented in this paper computes a distributed clustering on arbitrary topologies in a totally decentralized way, and tolerates mobility. Most of the distributed clustering algorithms so far are based on the election of a leader, called a \emph{``clusterhead''}. Most of them suffer from the fact that a single link or node disconnection can entail a complete change of the clustering. Moreover, most of them specify the clustering problem as finding disjoint clusters that are star subgraphs. The specification we use is more advanced: we build clusters with a size higher than a parameter $m$, and are locally optimal, in the sense that no cluster can be divided into two clusters greater than $m$ (no global optimum can be computed for such a problem without a global view of the system, which would be in opposition to the distributed nature of the algorithm). The way we handle mobility ensures that the only clusters affected by a node or link disconnection are, in the worst case, the clusters in which it took place, and the clusters adjacent to it. The reconfiguration is, as far as possible, local. Indeed, the loss of the $m^\text{th}$ node in a cluster needs this cluster to be deleted (to fit the $|V|\geq m$ constraint), and adjacent clusters to recruit the newly orphan nodes.

Thus, this algorithm provides a locally optimal clustering, and, in terms of affected nodes, an optimal reconfiguration. We now aim at better studying the complexity of this algorithm, both on the theoretical level (although we already know that the size of the network has only a weak influence on the time to obtain a global clustering, thanks to the concurrent construction of the different clusters), and through simulations.

We are also interested in making this algorithm self-stabilizing, in order to take into account, for instance, the possible message losses. Starting from any arbitrary configuration, the values of the nodes variables being arbitrary, and arbitrary messages being in transit, the system has to reach a clustering meeting the specification. We already worked on a self-stabilizing random walk based token circulation (\cite{BeBF04b}), so that we have clues on how to manage failures on the token. We still have to use this block to compute a self-stabilizing distributed clustering.

\bibliographystyle{alpha}
\bibliography{biblio}

\end{document}